\newtheorem*{claim}{Claim}
\theoremstyle{plain}
\newtheorem{theorem}{Theorem}[section]
\newtheorem{corollary}[theorem]{Corollary}
\newtheorem{proposition}[theorem]{Proposition}
\newtheorem{lemma}[theorem]{Lemma}
\theoremstyle{definition}
\newtheorem{definition}[theorem]{Definition}
\newtheorem{example}[theorem]{Example}
\newtheorem{remark}[theorem]{Remark}
\newcommand{\eql}[1]{\overset{\scriptstyle #1}{=}}
\def \catC {\mathbf{C}}
\def \catA {\mathbf{A}}
\newcommand{\tr}[1]{\xrightarrow{#1}}    
\newcommand{\tl}[1]{\xleftarrow{#1}}    
\newcommand{\ord}[1]{\overline{#1}}  
\newcommand{\copair}[1]{({#1})}    
\newcommand{\Cospan}[1]{\mathsf{Csp}(#1)}
\def \FINSET {\mathbf{FinSet}} 
\newcommand{\PROP}[1]{\mathbf{Prop}_{\scriptscriptstyle #1}} 
\def \SMC{\mathbf{SymCat}} 
\def \CAT{\mathbf{Cat}} 
\newcommand{\perm}[1]{\mathbf{Perm}_{\scriptscriptstyle #1}}
\newcommand{\freehyp}[1]{\mathbf{H}_{\scriptscriptstyle #1}}
\newcommand{\syntax}[1]{\mathbf{P}_{\scriptscriptstyle #1}}
\newcommand{\FTerm}[1]{\mathbf{FTerm}_{\scriptscriptstyle #1}}
\newcommand{\Hyp}[1]{\mathbf{Hyp}_{\scriptscriptstyle #1}}
\newcommand{\frob}[1]{\mathbf{Frob}_{\scriptscriptstyle #1}}
\newcommand{\synTosem}[1]{[\! [ #1 ]\! ]}
\newcommand{\frobTosem}[1]{[ #1 ]}
\newcommand{\allTosem}[1]{\langle\! \langle #1 \rangle \! \rangle}
\newcommand{\rewiring}[1]{\ulcorner #1 \urcorner}
\newcommand\symNet{\lower3pt\hbox{$\includegraphics[width=20pt]{graffles/symmetryalt.pdf}$}}
\newcommand\Idnet{\lower3pt\hbox{$\includegraphics[width=20pt]{graffles/id.pdf}$}}
\newcommand{\Sigmafrob}[1]{\Sigma^{\scriptscriptstyle #1}_{\scriptscriptstyle \frob{}}}
\newcommand\lccB{\lower5pt\hbox{$\includegraphics[width=25pt]{graffles/rccr.pdf}$}}
\newcommand\rccB{\lower5pt\hbox{$\includegraphics[width=25pt]{graffles/lccl.pdf}$}}
\newcommand\lccn{\lower5pt\hbox{$\includegraphics[width=20pt]{graffles/cup.pdf}$}}
\newcommand\rccn{\lower5pt\hbox{$\includegraphics[width=20pt]{graffles/cap.pdf}$}}
\def \df {\ \ensuremath{:\!\!=}\ }
\newcommand{\rring}[1]{\ensuremath{\mathbb{#1}}}
\newcommand{\N}{\rring{N}}
\newcommand{\pullbacktop}[4]{%

{#1} \ar@/_/[ddr]_{#4} \ar@/^/[drr]^{#2}%

\ar@{.>}[dr]|-{#3} \\}
\newcommand{\id}{id}
\newcommand{\cgr}[2][scale=0.45]{\raisebox{0.1em}{\begingroup
\setbox0=\hbox{\includegraphics[#1]{graffles/#2}}%
\parbox{\wd0}{\box0}\endgroup}}
\def \poi {\,\ensuremath{;}\,}
\def \df {\ensuremath{:=}}
\def \tns {\ensuremath{\oplus}}
\def \: {\colon}
\newcommand\Bmult{\lower4pt\hbox{$\includegraphics[width=17pt]{graffles/Bmult.pdf}$}}
\newcommand\Bcomult{\lower4pt\hbox{$\includegraphics[width=17pt]{graffles/Bcomult.pdf}$}}
\newcommand\Bunit{\lower4pt\hbox{$\includegraphics[width=14pt]{graffles/Bunit.pdf}$}}
\newcommand\Bcounit{\lower4pt\hbox{$\includegraphics[width=14pt]{graffles/Bcounit.pdf}$}}
\newcommand\Wmult{\lower4pt\hbox{$\includegraphics[width=17pt]{graffles/Wmult.pdf}$}}
\newcommand\Wcomult{\lower4pt\hbox{$\includegraphics[width=17pt]{graffles/Wcomult.pdf}$}}
\newcommand\Wunit{\lower4pt\hbox{$\includegraphics[width=14pt]{graffles/Wunit.pdf}$}}
\newcommand\Wcounit{\lower4pt\hbox{$\includegraphics[width=14pt]{graffles/Wcounit.pdf}$}}
\newcommand{\rrule}[2]{\ensuremath{( #1,#2 )}}
\newcommand{\node}{\lower0pt\hbox{$\includegraphics[width=6pt]{graffles/node.pdf}$}}
\newcommand{\hyperedge}{\lower2pt\hbox{$\includegraphics[width=25pt]{graffles/hyperedge.pdf}$}}
\newcommand{\ZeronetT}{\lower4pt\hbox{$\includegraphics[width=14pt]{graffles/idzerocircuit.pdf}$}}
\newcommand\idncircuit{\lower4pt\hbox{$\includegraphics[width=18pt]{graffles/idncircuit.pdf}$}}
\newcommand{\dlcorner}{{\ar@{}[dl]|(.8){\text{\large $\urcorner$}}}}
\newcommand{\drcorner}{{\ar@{}[dr]|(.8){\text{\large $\ulcorner$}}}}
\newcommand{\sg}{\!\lower1pt\hbox{$\includegraphics[width=8pt]{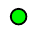}$}\!} 
\newcommand{\sr}{\!\lower1pt\hbox{$\includegraphics[width=8pt]{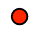}$}\!} 
\newcommand{\sbl}{\!\lower1pt\hbox{$\includegraphics[width=8pt]{graffles/blackbullet.pdf}$}\!} 
\newcommand*{\@old@slash}{}\let\@old@slash\slash
\def\slash{\relax\ifmmode\delimiter"502F30E\mathopen{}\else\@old@slash\fi}
\newcommand{\Rew}[1]{\Rightarrow_{\! \scriptscriptstyle #1} }
\newcommand{\RS}{\mathcal{R}} 
\newcommand{\DPOstep}[1]{\rightsquigarrow_{\! \scriptscriptstyle #1} } 
\newcommand{\DPOstepL}[1]{\leftsquigarrow_{\! \scriptscriptstyle #1} }
\newcommand{\col}{\mathcal{C}}
\title{Rewriting in Symmetric Monoidal \\ and Hypergraph Categories}
\title{Rewriting in Free Hypergraph Categories}
\author{Fabio Zanasi
\institute{University College London, United Kingdom}
\email{f.zanasi@ucl.ac.uk}}
\begin{document}
\maketitle

\begin{abstract} We study rewriting for equational theories in the context of symmetric monoidal categories where there is a separable Frobenius monoid on each object. These categories, also called hypergraph categories, are increasingly relevant: Frobenius structures recently appeared in cross-disciplinary applications, including the study of quantum processes, dynamical systems and natural language processing. In this work we give a combinatorial characterisation of arrows of a free hypergraph category as cospans of labelled hypergraphs and establish a precise correspondence between rewriting modulo Frobenius structure on the one hand and double-pushout rewriting of hypergraphs on the other. This interpretation allows to use results on hypergraphs to ensure decidability of confluence for rewriting in a free hypergraph category. Our results generalise previous approaches where only categories generated by a single object (props) were considered.
\end{abstract}

\section{Introduction}\label{intro}

Symmetric monoidal categories (SMCs) are an increasingly popular mathematical framework for the formal analysis of network-style diagrammatic languages that are found across different disciplines. In an SMC arrows have two composition operations, intuitively corresponding to sequential ($c \poi d$) and parallel ($c \tns d$) combination of compound systems, and there are symmetry arrows $\cgr{sym.pdf}$, intuitively representing tangles of wires. These constructs are traditionally rendered by the two-dimensional notation of \emph{string diagrams}, which has the key advantage of absorbing most of the structural equalities prescribed by the definition of SMC. For instance, the two sides of the exchange law $(a_1 \poi a_2) \tns (b_1 \poi b_2) = (a_1 \tns b_1) \poi {(a_2 \tns b_2)}$ are encoded by the same string diagram $\cgr[width=1.5cm]{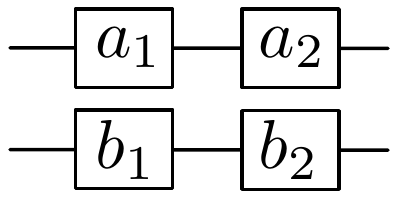}$. The graphical syntax emphasises \emph{connectivity} and \emph{resource-exchange} between components, which makes it particularly effective in the analysis of challenging computational models such as distributed systems (based on threads communication) and quantum processes (powered by a notion of non-separable---entangled---states).

Some applications demand SMCs with a richer structure. In this paper we focus on \emph{hypergraph categories}, which are SMCs where each object $x$ is equipped with a \emph{separable Frobenius monoid}. That means, for each $x$ there are operations as on the left, forming a commutative monoid and a commutative comonoid that interact according to the Frobenius  law and the separability law, as on the right.
\begin{equation}\label{eq:frobx}
\cgr{multx.pdf} \qquad \cgr{unitx.pdf} \qquad \cgr{comultx.pdf} \qquad \cgr{counitx.pdf} \qquad \qquad \qquad \cgr{frobxL.pdf} \ = \ \cgr{frobxC.pdf} \ = \ \cgr{frobxR.pdf} \qquad \cgr{sepxL.pdf} \ = \ \cgr{sepxR.pdf}
\end{equation}
The core intuition is that this extra structure allows dangling wires of a string diagram to fork, be discarded, be moved to the left/right side, resulting in a more flexible manipulation of the interfaces (variables, memory cells) of the represented system. The use of hypergraph categories as an algebraic approach to computation was pioneered by Walters and collaborators \cite{Carboni1987,Katis1997a}, under the name of well-supported compact closed categories. Since then separable Frobenius monoids have appeared ubiquitously in diagrammatic calculi across diverse research threads. They notably feature in the ZX-calculus~\cite{Coecke2008} (quantum theory), where each Frobenius structure has a precise physical meaning in terms of quantum observables. Frobenius monoids also form the backbone of the calculus of stateless connectors \cite{Bruni2006}, the calculus of signal flow diagrams \cite{Bonchi2014b,BonchiSZ17}, Baez's network theory \cite{BaezErbele-CategoriesInControl} and Pavlovic's monoidal computer \cite{Pavlovic13}. More recently, a particular attention has been devoted to generic constructions of hypergraph categories through abstract notions of span, relation and their duals \cite{Zanasi16,Fon16,MG17,FZ-calco}.

Whereas separable Frobenius monoids constitute a common core for the aforementioned approaches, in each application string diagrams are further quotiented by domain-specific equations, instrumental in defining the appropriate notion of behavioural equality of systems. The perspective of this work is to acknowledge the conceptual difference between the symmetric monoidal and Frobenius structure on the one hand, which is a built-in part of any hypergraph category, and the domain-specific equations on the other hand. We shall study the latter as \emph{rewriting rules}: if the left hand side of such an equation can be found in a larger string diagram, it can be deleted and replaced with its right hand side.

This is coherent with the everyday practice of users of diagrammatic calculi and is the starting point for implementing graphical reasoning in a proof assistant. There is a thorough mathematical theory of rewriting for monoidal categories, which regards rewrite rules as generator 2-cells (variously called computads~\cite{Street-2cats} or polygraphs~\cite{Burroni1993}) and the possible rewriting trajectories as composite two-cells. However, this abstract perspective does not provide immediate help when it comes to \emph{implementing} rewriting. The main challenge is a concrete understanding of matching: in order to detect whether a string diagram contains the left-hand side of a rewriting rule, one needs to consider all its possible decompositions according to the structural equations. In an hypergraph category, this amounts to say that rewriting happens modulo the equations of separable Frobenius monoids. For instance, the rewriting rule 
$$\cgr[width=3.2cm]{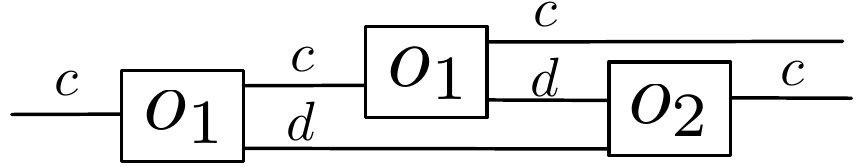}\quad  \Rew{} \quad \cgr[width=2.8cm]{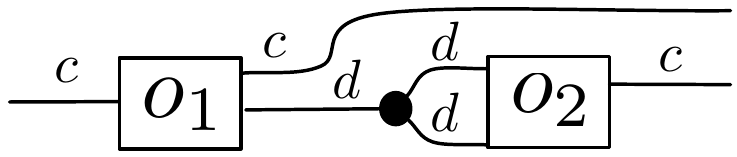}$$
applies in the leftmost string diagram below, module the separable Frobenius structure on objects $c, d$.
\begin{equation}\label{eq:matchingproblem}
\begin{aligned}
\cgr[width=4.2cm]{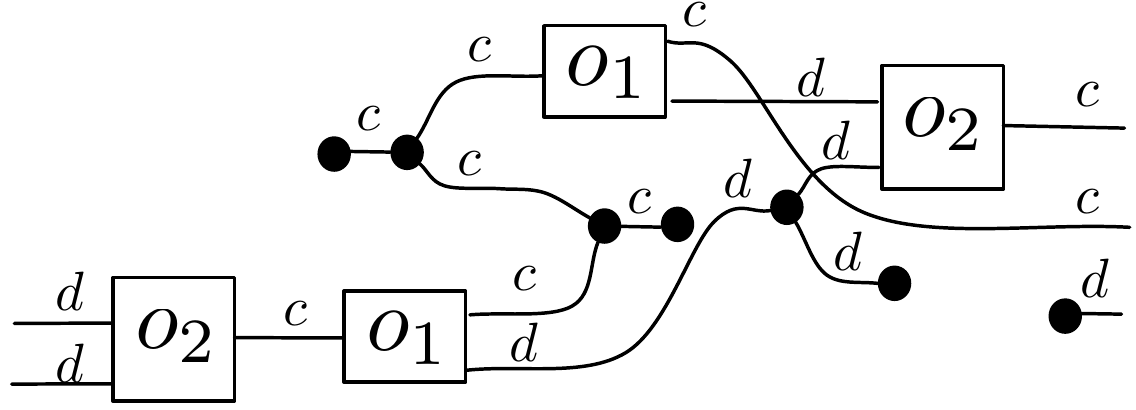} \ \eql{\eqref{eq:frobx}} \ \cgr[width=4.2cm]{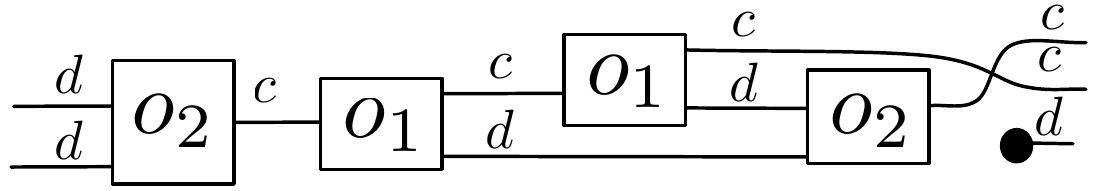} \quad \Rew{} \quad \cgr[width=4.2cm]{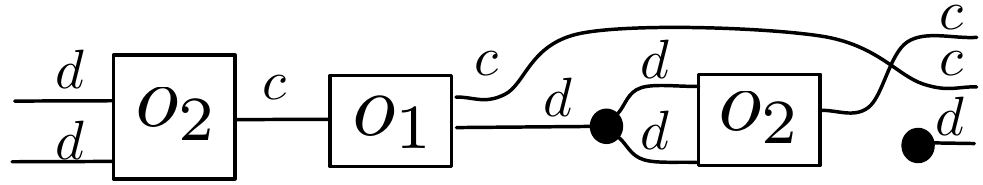}
\end{aligned}
\end{equation}
We shall develop a combinatorial characterisation of this rewriting mechanism, in three steps:
\begin{itemize}
\item the first step (Section \ref{sec:combinatorialmodel}) is to give a combinatorial description of the string diagrams in free hypergraph categories. Our choice is a category of cospans of hypergraphs, which is shown to be isomorphic to the free hypergraph category generated by a signature. This data structure encapsulates all the equivalent representations of a string diagram modulo Frobenius into a single object, thus easing the complexity of matching. The idea of the isomorphism is that boxes in a string diagram are represented as hyperedges, and wires as nodes. The use of cospans $I_l \tr{f} G \tl{g} I_r$ is essential: the carrier $G$ encodes the string diagram itself, whereas $I_l$ and $I_r$ are discrete hypergraphs (sets of nodes) that indicate through $f$ and $g$ which nodes of $G$ are dangling wires on the left and on the right of the corresponding string diagram.

As the name suggests, the close relationship between hypergraph categories and hypergraph structures was clear to previous authors \cite{KissingerHypergraph}, as well as the use of cospans to mimic interfaces \cite{Ehrig2004,Rosebrugh2005}. Our characterisation combines existing approaches in a way that best suits the application to rewriting. The main generalisation is characterising hypergraph categories that are freely generated by multi-sorted instead of single-sorted signatures. 

\item The second step (Section \ref{sec:rewriting}) is to exploit the combinatorial interpretation to realise string diagram rewriting as rewriting of hypergraphs. The fact that cospans of hypergraphs form an adhesive category \cite{Lack2005} gives an off-the-shelf theory of double-pushout (DPO) rewriting. We show that rewriting modulo Frobenius and DPO rewriting of hypergraphs are essentially the same thing. The problem of matching in an hypergraph category \eqref{eq:matchingproblem} is reduced to finding an hypergraph homomorphism.

\item As last contribution (Section \ref{sec:confluence}), we show that confluence for terminating rewriting systems in hypergraph categories is decidable, as it is reducible to a computable critical pair analysis. This well-known property of term rewriting becomes a non-trivial question when dealing with two-dimensional entities. For instance, in the aforementioned polygraph approach~\cite{Burroni1993}, where critical pairs are considered in the string diagrammatic syntax rather than in a graph model, even a finite set of rewriting rules may yield infinitely many critical pairs~\cite{MimramFix}. In the context of ordinary DPO graph rewriting decidability also fails~\cite{Plump1993} unless further conditions are imposed, such as requiring that all critical pairs satisfy a syntactic condition called coverability~\cite{Plump10} or that they are joinable in a stricter sense~\cite{EhrigHPP04}. We establish our decidability result within the framework of recent work \cite{BGKSZ-esop17} that studies confluence for DPO hypergraph rewriting \emph{with interfaces}. Not only this variant enjoys decidability without further restrictions on critical pairs, but is precisely tailored for the interpretation of ``syntactic'' rewriting from hypergraph categories. We refer to \cite{BGKSZ-esop17} for a more extensive discussion of how the interface approach compares to others in the DPO rewriting literature. 
\end{itemize}

Rewriting modulo Frobenius structure has been studied along the same lines in \cite{BGKSZ-lics16,BGKSZ-esop17}. These recent works by the author and collaborators serve as a roadmap for this paper: the aim here is to verify that such results generalise to multi-sorted algebraic theories, for which the freely generated category has a Frobenius structure on each sort. In light of \cite{BGKSZ-lics16,BGKSZ-esop17}, the way this generalisation unfolds is not particularly surprising, as we are essentially able to lift the same proof techniques from a single to multiple sorts. However, we believe that the redaction of a reference paper for these results is timely. Firstly, it is justified by the renewed interest for hypergraph categories, witnessed by several recent applications, especially to circuit theory \cite{Fon16} and to natural language semantics \cite{MG17,KartsaklisSPC14}: using the theory developed in \cite{BGKSZ-lics16,BGKSZ-esop17} is going to require the full generality of the multi-sorted case. Secondly, another justification comes from axiomatic approaches to various families of systems (concurrent ~\cite{Bruni2006}, quantum~\cite{Coecke2008}, dynamical~\cite{Bonchi2014b,BaezErbele-CategoriesInControl}) in which the equational theory axiomatising system behaviour includes two or more Frobenius algebras. When it comes to rewriting, the approach introduced in \cite{BGKSZ-lics16} only allows to absorb \emph{one} Frobenius structure in the combinatorial model. In this paper, we show how additional Frobenius structures can be also absorbed in the same manner\footnote{The fact that multiple Frobenius structures are on the same object (like in the aforementioned theories) or on different objects (like in this work) of a category may be overcome with the addition of `switch' operations from one object to the others, as we are going to show in a paper in preparation.}, thus reducing the complexity of the aforementioned axiomatisations and simplifying the task of studying normal forms, confluence and termination.


\paragraph{Notation.}
In a category $\catC$ with coproducts, $\copair{h_1,h_2} \: X+Y \to Z$ is the copairing of $h_1 \: X \to Z$ and $h_2 \: Y \to Z$, defined by universal property of $+$. Also, $f \poi g \colon a \to c$ is the composition of arrows $f \colon a \to b,\, g\colon b \to c$. We sometimes write $a\tr{f}b$ or $b \tl{f} a$ for $f \: a \to b$, or also $\tr{f}$ and $\tl{f}$ if object names are immaterial for the context. We write $\tns$ for the monoidal product in a monoidal category.


\section{Props and Hypergraph Categories}\label{sec:background}



We are going to study hypergraph categories freely generated by a signature of operations. The following is the notion appropriate to the monoidal context.

\begin{definition} A \emph{monoidal theory} is a tuple $(\Sigma, \col)$ of a \emph{signature} $\Sigma$ and a finite set $\col$ of \emph{colours}. Elements of $\Sigma$ are \emph{operations} $o \: w \to v$ with a certain \emph{arity} $w$ and \emph{coarity} $v$, where $w, v \in \col^{\star}$.  
\end{definition}

Generic theories are typically triples, allowing also for a set of equations on $\Sigma$-terms. We do not need that level of generality here: equations will be treated differently, as rewriting rules (unless they are structural, like the equations of symmetric monoidal categories or of hypergraph categories, see below).


Towards hypergraph categories, it is instrumental to describe first the free \emph{symmetric} monoidal category generated by a theory $(\Sigma,\col)$, which is called a $\col$-coloured prop~\cite{HackneyColouredPROPs15} (\textbf{pro}\-duct and \textbf{p}ermutation category). This works in analogy with the single-sorted case $\col = \{ c \}$, in which monoidal theories act as presentations for ($\{c\}$-coloured) props~\cite{Lack2004a}.

\begin{definition}\label{def:prop} Let $\col$ be a finite set. A \emph{$\col$-coloured prop} is a symmetric monoidal category (SMC) where the set of objects is $C^{\star}$ and the monoidal product $\tns$ on objects is word concatenation. $\col$-coloured props form a category $\PROP{\col}$ with morphisms the identity-on-objects symmetric monoidal functors.

Given a monoidal theory $(\Sigma, \col)$, one can freely construct a prop $\syntax{\Sigma , \col}$ with arrows the $\Sigma$-terms quotiented by the laws of symmetric monoidal categories. $\Sigma$-terms are freely obtained by combining operations in $\Sigma$, a \emph{unit} $\id \colon c\to c$ for each $c \in \col$ and a \emph{symmetry} $\sigma_{c,d} \colon cd\to dc$ for each $c,d\in \col$, by sequential ($;$) and parallel ($\tns$) composition. That means, given terms $a \colon w_1 \to w_2$, $b \colon w_2\to w_3$, $a' \colon v_1\to v_2$, one constructs new terms $a \poi b \colon w_1\to w_3$ and $a \tns a' \colon w_1 v_1 \to w_2 v_2$.
\end{definition}

We shall adopt the graphical notation of string diagrams~\cite{Selinger2009} for the arrows of $\syntax{\Sigma,\col}$. An arrow $a \: w_1 \to w_2$ is pictured as $\cgr{diagA.pdf}$. Compositions via $\poi$ and $\tns$ are drawn respectively as horizontal and vertical juxtaposition, that means, $a \poi b$ is drawn $\cgr[height=.6cm]{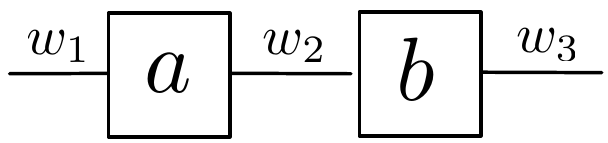}$ and $a \tns a'$ is drawn $\cgr[height=.8cm]{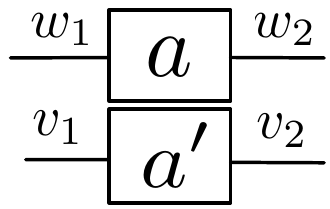}$. There are specific diagrams for the symmetric monoidal structure, namely $\cgr{idw.pdf}$ for the identity $\id_w \: w \to w$ and $\cgr{symwu.pdf}$ for the symmetry $\sigma_{w,u} \: wu \to uw$, for $w, u \in \col^{\star}$. These are definable from the basic identities and symmetries for colours in $\col$ using the pasting rules for $\poi$ and $\tns$.

\begin{example}\label{ex:perm}
The initial object in $\PROP{\col}$ is the $\col$-coloured prop $\perm{\col}$ whose arrows $w \to v$ are permutations of $w$ into $v$ (thus arrows exist only when the word $v$ is an anagram of the word $w$). $\perm{\col}$ is freely generated by the monoidal theory $(\emptyset,\col)$.
\end{example}

\begin{example} 
For $\col' \subseteq \col$, the $\col$-coloured prop $\frob{\col'}$ of \emph{separable Frobenius $\col'$-monoids} is freely generated by the monoidal theory $(\Sigmafrob{\col'}, \col)$, where $\Sigmafrob{\col'} = \{ \ \cgr{Bmultc.pdf} \ , \ \cgr{Bunitc.pdf} \ , \ \cgr{Bcomultc.pdf} \ , \ \cgr{Bcounitc.pdf} \mid c \in \col' \}$, and quotiented by equations, for each $c \in \col'$, stating that $\cgr{Bmultc.pdf}$ and $\cgr{Bunitc.pdf}$ form a commutative monoid \eqref{eq:multc}, that $\cgr{Bcomultc.pdf}$ and $\cgr{Bcounitc.pdf}$ form a commutative comonoid \eqref{eq:comultc}, and that these interact according to the Frobenius law and the separability law \eqref{eq:frobc}.
\begin{gather}\label{eq:multc} 
\cgr{multcommc.pdf} = \cgr{Bmultc.pdf} \qquad \qquad \cgr{unitlawc.pdf} = \cgr{idc.pdf} \qquad \qquad \cgr{assocLc.pdf} = \cgr{assocRc.pdf} 
\\ 
\label{eq:comultc} 
\cgr{comultcommc.pdf} = \cgr{Bcomultc.pdf} \qquad \qquad \cgr{counitlawc.pdf} = \cgr{idc.pdf} \qquad \qquad \cgr{coassocLc.pdf} = \cgr{coassocRc.pdf}
\\
\label{eq:frobc} 
\cgr{cFrobL.pdf} = \cgr{cFrobC.pdf} = \cgr{cFrobR.pdf}
\qquad \qquad 
\cgr{cFrobSepL.pdf} = \cgr{cFrobSepR.pdf} 
\end{gather}
\end{example}
As mentioned, when $\col$ is a singleton, $\col$-coloured props are just called props, i.e. SMCs with objects the natural numbers where the monoidal product is addition on objects. For later use it is convenient to record the following result about the single-sorted case. It involves the prop $\Cospan{\FINSET}$ whose arrows $n_1 \to n_2$ are cospans $\ord{n_1} \tr{f} \ord{n_3} \tl{g} \ord{n_2}$ of functions between ordinals $\ord{n} \df \{0,\dots,n-1\}$.

\begin{proposition}[\cite{Bruni01somealgebraic,Lack2004a}]\label{prop:cospanfrob} There is an isomorphism of $\{c\}$-coloured props between $\frob{\{c\}}$ and $\Cospan{\FINSET}$. It is defined by the following mapping on the $\Sigmafrob{\{c\}}$-operations.
\begin{gather*}
\cgr{Bmultc.pdf} \quad \mapsto \quad \cgr{cspMult.pdf} \qquad 
\cgr{Bcomultc.pdf} \quad \mapsto\qquad \cgr{cspComult.pdf} \\
\cgr{Bunitc.pdf} \quad \mapsto \quad \cgr{cspUnit.pdf} \qquad 
\cgr{Bcounitc.pdf} \quad \mapsto \quad \cgr{cspCounit.pdf}
\end{gather*}
\end{proposition}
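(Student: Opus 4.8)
The plan is to define the comparison functor by the universal property of the free prop and then show it is a bijection on hom-sets. Let $\funF$ be the assignment sending the colour $c$ to the object $1$ and each generator of $\Sigmafrob{\{c\}}$ to the cospan in the statement: multiplication to $\ord 2 \tr{} \ord 1 \tl{\id} \ord 1$, comultiplication to $\ord 1 \tr{\id} \ord 1 \tl{} \ord 2$, unit to $\ord 0 \tr{} \ord 1 \tl{\id} \ord 1$ and counit to $\ord 1 \tr{\id} \ord 1 \tl{} \ord 0$ (unlabelled legs being the unique functions). Since $\frob{\{c\}}$ is freely generated by $(\Sigmafrob{\{c\}},\{c\})$ modulo \eqref{eq:multc}, \eqref{eq:comultc} and \eqref{eq:frobc}, to obtain a prop morphism $\funF \colon \frob{\{c\}} \to \Cospan{\FINSET}$ it suffices to check that these equations hold between the assigned cospans, which is a routine pushout computation in $\FINSET$. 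For instance, separability becomes the pushout of $\ord 1 \tl{} \ord 2 \tr{} \ord 1$ along the two codiagonals, which is $\ord 1$, so the composite is the identity cospan; both halves of the Frobenius law reduce to $\ord 2 \tr{} \ord 1 \tl{} \ord 2$; associativity, unitality and commutativity are similar. As $\funF$ is the identity on objects after identifying $\{c\}^\star$ with $\N$, it remains to show each $\funF\colon \frob{\{c\}}(m,n) \to \Cospan{\FINSET}(m,n)$ is a bijection.

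For fullness I would use that every cospan factors through its apex: $\ord m \tr{f} \ord k \tl{g} \ord n$ is the pushout composite of the ``monoid'' cospan $\ord m \tr{f} \ord k \tl{\id} \ord k$ with the ``comonoid'' cospan $\ord k \tr{\id} \ord k \tl{g} \ord n$. It therefore suffices to realise every cospan whose right (resp.\ left) leg is an identity, i.e.\ every function. Any function $\ord a \to \ord b$ is a composite of symmetries, multiplications and units (permute the domain, fold each non-empty fibre by iterated multiplication, and insert a unit for each point of $\ord b$ outside the image), all of which lie in the image of $\funF$; dually for the comonoid cospans. Hence $\funF$ is full.

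For faithfulness I would set up a normal form for arrows of $\frob{\{c\}}$. Up to isomorphism of the apex, a cospan $m \to n$ is precisely a partition of $\ord m \sqcup \ord n$ into blocks (the apex components meeting the legs) together with a number $q \in \N$ of apex components disjoint from both legs. On the syntactic side, the monoid, comonoid and Frobenius laws let one gather the wires attached to each block into a single connected sub-diagram, and separability collapses that sub-diagram to the canonical ``spider'' with the prescribed inputs and outputs, while the $q$ free components correspond to $q$ disjoint loops (units composed with counits). I would prove that every term equals such a normal form, and that $\funF$ reads off the partition and $q$ directly, so distinct normal forms have distinct images; together with fullness this yields the bijection and hence the claimed prop isomorphism.

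The main obstacle is the faithfulness step: one must show the equational theory is strong enough to reduce every term to a \emph{unique} normal form. This is the spider theorem for separable commutative Frobenius monoids -- that any connected diagram built from the generators depends only on its numbers of inputs and outputs -- whose delicate point is that separability removes exactly the multiplicities (repeated fork-then-merge) that would otherwise produce several non-equal diagrams with the same interface. This can be handled either by a termination-and-confluence analysis of the rewriting system obtained by orienting \eqref{eq:multc}--\eqref{eq:frobc}, or by importing the description of $\Cospan{\FINSET}$ as the composite of the free monoid prop and the free comonoid prop glued along the distributive law encoded by the Frobenius and separability equations.
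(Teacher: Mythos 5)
The paper does not prove this proposition: it is imported from the literature (Bruni--Gadducci and Lack), so there is no in-paper argument to compare against. Judged on its own terms, your outline is correct and is essentially the standard proof found in those references. The well-definedness check (that the cospans assigned to the generators satisfy \eqref{eq:multc}--\eqref{eq:frobc}) and the fullness argument (factor $\ord{m} \tr{f} \ord{k} \tl{g} \ord{n}$ through the identity on its apex, then express each function as a composite of symmetries, multiplications and units, using that $\FINSET$ is the free prop on a commutative monoid, and dually for the comonoid leg) are both sound. You also correctly identify faithfulness as the real content: one must show the equations suffice to collapse every connected component to a spider, with separability killing multiple parallel connections while the absence of the ``bone'' law keeps the $q$ closed components as genuine data, matching $\Cospan{\FINSET}(0,0) \cong \N$. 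Of the two routes you offer for this step, the second --- presenting $\Cospan{\FINSET}$ as the composite of the free commutative monoid prop and the free commutative comonoid prop via a distributive law given by pullback/pushout --- is precisely Lack's argument in the cited reference, and has the advantage of delivering uniqueness of the monoid-then-comonoid (or rather comonoid-then-monoid) factorisation for free rather than via a confluence analysis. The one point to be careful about if you pursue the rewriting route is that ``every term equals a normal form'' must be paired with ``distinct normal forms have distinct images under $\funF$''; your partition-plus-$q$ invariant does give this, so the plan closes correctly.
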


As outlined in the introduction, we are interested in studying non-structural equations as rewriting rules. We now define the appropriate notion of rewriting for arrows in a prop. We call it ``syntactic'' to emphasise that matching happens when the left-hand side of a rule is a sub-term.

\begin{definition}[Syntactic rewriting in a prop] \label{defn:rewprop}
A \emph{rewriting rule} in a $\col$-coloured prop $\catC$ is a pair of morphisms $l, r \: v_1 \to v_2$ in $\catC$, for which we use the notation $\rrule{l}{r} \: v_1 \to v_2$. A \emph{rewriting system} $\RS$ is a finite set of rewriting rules. Given $a,b \: w_1 \to w_2$ in $\catC$, we say that $a$ rewrites into $b$ via $\RS$, notation $a \Rew{\RS} b$, if there are $a_1$ and $a_2$ yielding the following decompositions in $\catC$, where $\rrule{l}{r} \: v_1 \to v_2$ is in $\RS$.
\begin{equation}\label{eq:rewpropmatch} \cgr{diagA.pdf} = \cgr{rewl.pdf} \qquad \qquad \cgr{diagB.pdf} = \cgr{rewr.pdf} \end{equation} 
\end{definition}

We are going to study syntactic rewriting in free hypergraph categories, which we now introduce together with their properties.

\begin{definition} An \emph{hypergraph category} is an SMC $\catA$ where each object $x \in\catA$ has a separable Frobenius structure, i.e., maps
$\cgr{multx.pdf}$, $\cgr{unitx.pdf}$, $\cgr{comultx.pdf}$ and $\cgr{counitx.pdf}$ forming a commutative monoid, a commutative comonoid and satisfying equations as in \eqref{eq:frobx} for each $x \in \catA$. Moreover, the Frobenius structure must be compatible with the monoidal product:
$$\cgr{multxtnsy.pdf}\ =\ \cgr{multxy.pdf} \qquad \qquad
 \cgr{unitxtnsy.pdf} \ =\ \cgr{unitxy.pdf} \qquad \qquad 
\cgr{comultxtnsy.pdf}\ =\ \cgr{comultxy.pdf} \qquad \qquad 
\cgr{counitxtnsy.pdf} \ =\ \cgr{counitxy.pdf}$$

The free hypergraph category over $(\Sigma, \col)$, notation $\freehyp{\Sigma , \col}$, is the free $\col$-coloured prop on $(\Sigma \uplus \Sigmafrob{\col},\col)$ quotiented by equations \eqref{eq:multc}, \eqref{eq:comultc} and \eqref{eq:frobc} for each $c \in \col$.
\end{definition}
 
 \medskip

Observe that the free construction of $\freehyp{\Sigma,\col}$ indeed creates a Frobenius structure for each object $w \in \col^{\star}$ of the category, canonically defined in terms of the one on colours. For instance, for $w = c_1 \dots c_n$:
\[ \cgr{multw.pdf} \ \df \ \cgr{multiplemultsc.pdf} \qquad\qquad
\cgr{unitw.pdf}  \ \df \  \cgr{multipleunitsc.pdf} \qquad\qquad
\cgr{comultw.pdf}  \ \df \  \cgr{multiplecomultsc.pdf} \qquad\qquad
\cgr{counitw.pdf}  \ \df \  \cgr{multiplecounitsc.pdf} \]

\begin{example}\label{ex:bipartite}
Fix a set $\col$ of colours with just two elements, noted $\sr$ and $\sg$, and a signature $\Sigma$ consisting of two ``colour switch'' operations, $\cgr{gbswitch.pdf} \: \sg \to \sr$ and $\cgr{bgswitch.pdf} \: \sr \to \sg$. We may construct the free hypergraph category $\freehyp{\Sigma,\{\sr,\sg\}}$ over $(\Sigma,\{\sr,\sg\})$. Here is an example of a string diagram in this category, where we use the more suggestive convention of colouring wires instead of labelling them with objects $\sr$ and $\sg$.
\begin{equation}\label{eq:bipartite}
\cgr{exbipartite.pdf}
\end{equation}
We claim that $\freehyp{\Sigma,\{\sr,\sg\}}$ is the same as the category of finite directed \emph{bipartite graphs} (with interfaces). This will become clear in Example \ref{ex:bipartite2}, after the characterisation provided by Corollary \ref{cor:freehyp=fterm}.\end{example}


We now observe that the free hypergraph category $\freehyp{\Sigma,\col}$ can be seen as a coproduct in $\PROP{\col}$. This will be useful in order to separate the component arising from $\Sigma$ from the built-in Frobenius structure.

\begin{proposition}\label{th:syncharacterisation} There is an isomorphism $\syntax{\Sigma , \col} + \frob{C} \cong \freehyp{\Sigma,\col}$ of $\col$-coloured props. It extends to an isomorphism of hypergraph categories: thus $\syntax{\Sigma , \col} + \frob{C}$ is the free hypergraph category on $(\Sigma,\col)$. \end{proposition}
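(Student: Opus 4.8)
The plan is to exhibit the isomorphism directly as a pair of mutually inverse $\col$-coloured prop morphisms, exploiting the two universal properties at our disposal: $\freehyp{\Sigma,\col}$ is by definition a prop \emph{freely presented by generators modulo relations}, while $\syntax{\Sigma , \col} + \frob{C}$ is a \emph{coproduct}. Both sides are thus determined by mapping-out universal properties, and the whole argument reduces to matching generators with generators. Concretely, I would first record the three presentations in play: $\syntax{\Sigma,\col}$ is the free prop on the theory $(\Sigma,\emptyset)$; $\frob{C}$ is the free prop on $(\Sigmafrob{\col}, E)$, where $E$ collects the equations \eqref{eq:multc}, \eqref{eq:comultc} and \eqref{eq:frobc}; and $\freehyp{\Sigma,\col}$ is, by its definition, the free prop on $(\Sigma \uplus \Sigmafrob{\col}, E)$.

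Next I would build the map $G \colon \syntax{\Sigma , \col} + \frob{C} \to \freehyp{\Sigma,\col}$ out of the coproduct by copairing. Its first component $\syntax{\Sigma,\col}\to\freehyp{\Sigma,\col}$ comes from freeness of $\syntax{\Sigma,\col}$ by interpreting each operation of $\Sigma$ as the corresponding generator of $\freehyp{\Sigma,\col}$; its second component $\frob{C}\to\freehyp{\Sigma,\col}$ comes from the presentation of $\frob{C}$ by sending each $\Sigmafrob{\col}$-generator to the matching generator of $\freehyp{\Sigma,\col}$, which is legitimate precisely because the relations $E$ already hold in $\freehyp{\Sigma,\col}$ by its definition. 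Setting $G \df \copair{-,-}$ of these two yields the desired prop morphism. Conversely I would build $F \colon \freehyp{\Sigma,\col}\to \syntax{\Sigma , \col} + \frob{C}$ from the presentation of $\freehyp{\Sigma,\col}$: send each $\Sigma$-generator through the left coproduct injection and each $\Sigmafrob{\col}$-generator through the right one; this is well defined because $E$ holds already in the summand $\frob{C}$, hence in the coproduct. Since $\Sigma \uplus \Sigmafrob{\col}$ generates both props and both composites $F\poi G$ and $G \poi F$ fix every generator, they are the respective identities, so $F$ and $G$ are mutually inverse.

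It then remains to upgrade this prop isomorphism to one of hypergraph categories and to justify the word \emph{free}. The coproduct acquires a separable Frobenius monoid on each colour through the injection $\frob{C}\to\syntax{\Sigma , \col} + \frob{C}$, hence on each object $w\in\col^\star$ by the canonical construction recalled after the definition; since $F$ and $G$ carry these generators to the matching Frobenius generators, the isomorphism is a morphism of hypergraph categories. For freeness I would verify the universal property: a morphism of hypergraph categories out of $\freehyp{\Sigma,\col}$ into a hypergraph category $\catA$ is forced to send the $\Sigmafrob{\col}$-generators to the canonical Frobenius structure of $\catA$, because such morphisms preserve the Frobenius structure by definition; hence it is determined by, and freely determined by, an interpretation of the operations of $\Sigma$ in $\catA$ — exactly the universal property of the free hypergraph category on $(\Sigma,\col)$.

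The step I expect to require the most care is not the combinatorics of matching generators, which is routine, but the justification that these mapping-out descriptions are legitimate: namely that $\frob{C}$ and $\freehyp{\Sigma,\col}$ genuinely carry the claimed presentations, so that a prop morphism out of them is the same datum as an interpretation of the generators validating $E$. This is where one leans on props being presentable by monoidal theories with equations, and on the functor from presentations to props preserving coproducts — the abstract reason why disjoint union of presentations computes $+$. A secondary point to check is that every morphism of hypergraph categories does preserve the Frobenius structure, since this is what makes the $\Sigmafrob{\col}$-generators ``forced'' in the universal property and thereby makes the name \emph{free hypergraph category} accurate.
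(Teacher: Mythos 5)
Your proposal is correct and follows essentially the same route as the paper: the paper's proof simply invokes the fact that coproducts in $\PROP{\col}$ are computed by disjoint union of presentations, so that $\syntax{\Sigma,\col}+\frob{\col}$ and $\freehyp{\Sigma,\col}$ share the presentation $(\Sigma\uplus\Sigmafrob{\col},\col)$ modulo \eqref{eq:multc}--\eqref{eq:frobc}, and then notes that the resulting isomorphism matches Frobenius structure to Frobenius structure. Your explicit construction of the mutually inverse prop morphisms by copairing and generator-matching, together with the verification of the mapping-out universal property for hypergraph categories, is just a more detailed unpacking of the same argument.
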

\begin{proof}
The first part follows from how coproducts are computed in $\PROP{\col}$. As $\syntax{\Sigma,\col}$ is presented by $(\Sigma, \col)$ and $\frob{C}$ by $(\Sigmafrob{\col},\col)$ quotiented by \eqref{eq:multc}-\eqref{eq:frobc}, then $\syntax{\Sigma , \col} + \frob{C}$ is presented by $(\Sigma \uplus \Sigmafrob{\col}, \col)$ quotiented by \eqref{eq:multc}-\eqref{eq:frobc}: this is precisely the definition of $\freehyp{\Sigma,\col}$. The second part holds because the isomorphism maps the Frobenius structure on $c \in \col$ in $\syntax{\Sigma , \col} + \frob{C}$ to the Frobenius structure on $c \in \col$ in~$\freehyp{\Sigma,\col}$.
\end{proof}

The free hypergraph category has a universal property (of a pushout) in $\SMC$ too. Details are in the proof of Corollary \ref{cor:faithful}, which uses this observation. We conclude by recalling that the Frobenius monoids in an hypergraph category defines a canonical compact closed structure. This also justifies the terminology ``well-supported compact closed categories'' originally used for hypergraph categories \cite{Carboni1987}. 

\begin{proposition}[\cite{Carboni1987}]\label{prop:compactclosed}
Hypergraph categories are (self-dual) compact closed.
\end{proposition}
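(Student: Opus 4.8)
The plan is to produce, object by object, the data of a self-dual compact closed structure and to check the triangle (snake) identities directly from the Frobenius axioms. Recall that a symmetric monoidal category is compact closed exactly when every object $x$ has a dual $x^\ast$ equipped with a unit $\eta^{\mathrm{cc}}_x\colon I\to x^\ast\tns x$ and a counit $\epsilon^{\mathrm{cc}}_x\colon x\tns x^\ast\to I$ satisfying the two triangle identities; no extra coherence axioms are required, and in the presence of the symmetry a right dual is automatically a two-sided one. I would set $x^\ast\df x$ — this is exactly what ``self-dual'' records — and, writing $\mu_x,\eta_x,\delta_x,\epsilon_x$ for the multiplication, unit, comultiplication and counit of the Frobenius structure on $x$, define the cup and cap by $\eta^{\mathrm{cc}}_x\df\eta_x\poi\delta_x\colon I\to x\tns x$ and $\epsilon^{\mathrm{cc}}_x\df\mu_x\poi\epsilon_x\colon x\tns x\to I$; graphically these are $\lccn$ and $\rccn$.

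The heart of the proof is a single triangle identity, which I would establish as a short chain of equalities (suppressing unitors). Expanding the definitions, $(\id_x\tns\eta^{\mathrm{cc}}_x)\poi(\epsilon^{\mathrm{cc}}_x\tns\id_x)$ equals $(\id_x\tns\eta_x)\poi(\id_x\tns\delta_x)\poi(\mu_x\tns\id_x)\poi(\epsilon_x\tns\id_x)$. The key observation is that the middle factor $(\id_x\tns\delta_x)\poi(\mu_x\tns\id_x)$ is precisely one of the shapes of the Frobenius law \eqref{eq:frobx}, so it equals $\mu_x\poi\delta_x$. This collapses the composite to $(\id_x\tns\eta_x)\poi\mu_x\poi\delta_x\poi(\epsilon_x\tns\id_x)$, whereupon the right unit law reduces $(\id_x\tns\eta_x)\poi\mu_x$ to $\id_x$ and the left counit law reduces $\delta_x\poi(\epsilon_x\tns\id_x)$ to $\id_x$, leaving $\id_x$. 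The second triangle identity is obtained by the mirror-image computation, now using the commutative and cocommutative form of the monoid and comonoid laws together with the symmetry (equivalently, by reflecting the string diagrams of the first). Note that separability is not needed for this argument: compact closedness uses only the Frobenius law and the (co)monoid laws.

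Finally I would check that the construction is uniform across all objects. By the remark following the definition of $\freehyp{\Sigma,\col}$, every object $w=c_1\cdots c_n$ — not merely the generating colours — carries a separable Frobenius structure, assembled from the structures on the $c_i$ and compatible with $\tns$; hence every object has a dual and the category is compact closed, and the compatibility with $\tns$ ensures that $(x\tns y)^\ast$ is $x^\ast\tns y^\ast$ with the product cup and cap, so the duality is monoidal as required. I expect the only genuine bookkeeping — the ``hard part'', such as it is — to be the careful tracking of associators and unitors in the snake calculation and the verification of \emph{both} triangle identities rather than one; all the conceptual content sits in the single use of the Frobenius law sandwiched between the unit and counit laws.
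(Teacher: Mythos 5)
Your proposal is correct and follows essentially the same route as the paper: the cup and cap are defined as unit-then-comultiplication and multiplication-then-counit of the Frobenius structure, and the snake identity is exactly the Frobenius law sandwiched between the (co)unit laws, with separability playing no role. You merely spell out the computation that the paper's proof leaves as an assertion.
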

\begin{proof}
It is useful to report how the compact closed structure is actually defined. For an object $x$ of an hypergraph category $\catA$, define
$\cgr{wccL.pdf}$ as $\cgr{wccfrobL.pdf}$ and $\cgr{wccR.pdf}$ as $\cgr{wccfrobR.pdf}$. The Frobenius equation \eqref{eq:frobc} implies the equation for compact closure:
$$\cgr{snakexL.pdf} = \cgr{idx.pdf} = \cgr{snakexR.pdf}.$$
The dual $\cgr{dualaL.pdf}$ of a morphism $\cgr{diagax.pdf}$ is defined as $\cgr{dualaR.pdf}$.

\end{proof}

\section{The Combinatorial Interpretation}\label{sec:combinatorialmodel}

According to Definition \ref{defn:rewprop}, syntactic rewriting in free hypergraph categories happens modulo Frobenius structure. The goal of this section is to give a combinatorial description of the free hypergraph category, so that a more concrete account of the associated rewriting becomes available.

Fix a monoidal theory $(\Sigma, \col)$. We shall work with finite directed hypergraphs, whose hyperedges are labelled in $\Sigma$ and nodes are labelled in $\col$. We shall visualise hypergraphs as follows: $\cgr{nodec.pdf}$ is a node labeled

\noindent \begin{minipage}{.7\textwidth}
with $c \in \col$ and $\cgr{hyperedgeo.pdf}$ is an hyperedge labeled with $o \in \Sigma$, with ordered tentacles attached to the left boundary linking to sources and the ones on the right linking to targets. An example is on the right, with $\col = \{ c_1, c_2\}$ and $\Sigma = \{ o_1 \: c_1 \to \epsilon , o_2 \: c_1 c_2 \to c_1 c_1 \}$.
\end{minipage}
\begin{minipage}{.2\textwidth}
\vspace{-.5cm}\[\quad \cgr[width=4cm]{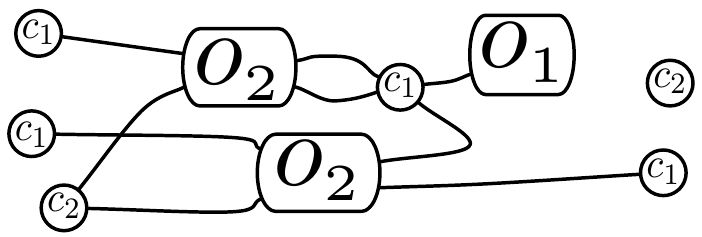}\]
\end{minipage}

We now organise these structures into a category. First, consider the SMC $\Hyp{}$ with objects the finite directed (unlabelled) hypergraphs and homomorphisms between them. The monoidal theory $(\Sigma, \col)$ itself can be seen as an object of $\Hyp{}$. For instance, $\Sigma$ and $\col$ as above yield the unlabelled hypergraph on the 

\noindent \begin{minipage}{.3\textwidth}
\[\cgr[width=4cm]{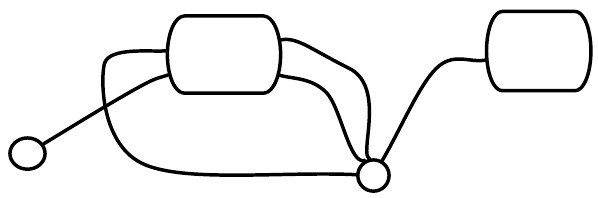}\]
\end{minipage}
\begin{minipage}{.7\textwidth}
left (where we ``call'' $o_2$ the leftmost and $o_1$ the rightmost hyperedge, and $c_2$ the leftmost and $c_1$ the rightmost node). The desired labelling is given formally by working in the slice category $\Hyp{} \setminus (\Sigma,\col)$, for which we shall use notation $\Hyp{\Sigma, \col}$. This definition ensures that a $\Sigma$-operation $o \: w \to v$ labels an hyperedge only when the label of its input (respectively, output) nodes forms the word $w$ ($v$).
\end{minipage}

The next step is to add interfaces, formally modelled by cospans. Fix a set $\{x_i\}_{i \in \N}$ totally ordered by $\N$. Define $\FTerm{\Sigma , \col}$ as the restriction of the category of cospans in $\Hyp{\Sigma, \col}$ to objects the discrete hypergraphs (i.e. no hyperedges) whose set of nodes is isomorphic to an initial segment of $\{x_i\}_{i \in \N}$. This restriction ensures that $\FTerm{\Sigma , \col}$ is a $\col$-coloured prop. Indeed, an object of $\FTerm{\Sigma , \col}$ can be identified with a natural number $k$ together with a labelling function $\{0,\dots,k-1\} \to \col$, which is the same as a word in $\col^*$. The notation $\FTerm{\Sigma , \col}$ stands for ``Frobenius termgraphs'', following the terminology introduced for the single-sorted case \cite{BGKSZ-lics16}. This name will be justified by the characterisation of Theorem~\ref{th:characterisation} below.

We now define the two components of the functor that is going to interpret the syntactic definition of an hypergraph category as a combinatorial structure. The key to the approach is the second definition below, which essentially tells that the combinatorial model is able to absorb all the complexity of Frobenius structure simply in terms of nodes.

\begin{definition} We define a $\col$-coloured prop morphisms $\allTosem{\cdot} \: \: \syntax{\Sigma , \col} + \frob{\col} \to \FTerm{\Sigma,\col}$ as the copairing of the following functors:
\begin{itemize}
\item $\synTosem{\cdot} \: \syntax{\Sigma , \col} \to \FTerm{\Sigma,\col}$ is defined by the following mapping on operations $o \: c_1c_2\dots c_m \to b_1b_2 \dots b_m$ of $\Sigma$ (where $c_i, b_i \in \col$).
$$\cgr{synTosemL.pdf} \quad \xmapsto{\synTosem{\cdot}} \quad \cgr{synTosemR.pdf}$$
The definition of hypergraph homomorphisms $p$ and $q$ is fixed by colour preservation.
\item $\frobTosem{\cdot} \: \frob{C} \to \FTerm{\Sigma,\col}$ is defined by the following mapping on the $\Sigmafrob{\col}$-operations.
\begin{equation}\label{eq:frobTosem}
\begin{aligned}
\cgr{Bunitc.pdf} \quad \xmapsto{\frobTosem{\cdot}} \quad \cgr{frobTosemUnit.pdf} \qquad \cgr{Bcounitc.pdf} \quad \xmapsto{\frobTosem{\cdot}} \quad \cgr{frobTosemCounit.pdf} \\
\cgr{Bmultc.pdf} \quad \xmapsto{\frobTosem{\cdot}} \quad \cgr{frobTosemMult.pdf} \qquad \cgr{Bcomultc.pdf} \quad \xmapsto{\frobTosem{\cdot}} \quad \cgr{frobTosemComult.pdf}
\end{aligned}
\end{equation}
Also here the definition of the hypergraph homomorphisms is predetermined.
\end{itemize}
\end{definition}

\medskip

We now have all the ingredients to state our characterisation theorem.

\begin{theorem}\label{th:characterisation} $\allTosem{\cdot} \: \syntax{\Sigma , \col} + \frob{C} \to \FTerm{\Sigma , \col}$ is an isomorphism of $\col$-coloured props. \end{theorem}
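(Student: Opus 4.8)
The plan is to exploit that $\allTosem{\cdot}$ is an identity-on-objects morphism of $\col$-coloured props, so that it is an isomorphism exactly when it is fully faithful, i.e. when each map $(\syntax{\Sigma , \col} + \frob{C})(w,v) \to \FTerm{\Sigma,\col}(w,v)$ is a bijection. Before addressing bijectivity one must check that $\allTosem{\cdot}$ is well defined: being the copairing of $\synTosem{\cdot}$ and $\frobTosem{\cdot}$, this amounts to verifying that the images under $\frobTosem{\cdot}$ of the $\Sigmafrob{\col}$-operations satisfy the commutative monoid, comonoid, Frobenius and separability equations \eqref{eq:multc}--\eqref{eq:frobc} inside $\FTerm{\Sigma,\col}$. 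As every such equation involves a single colour, this reduces to a per-colour computation of cospans of discrete hypergraphs, which is precisely the content of Proposition~\ref{prop:cospanfrob}. The same computation yields the key lemma I would isolate first: the restriction of $\frobTosem{\cdot}$ to the sub-prop of $\FTerm{\Sigma,\col}$ whose carriers are discrete hypergraphs is an isomorphism onto that sub-prop. This is the many-coloured version of Proposition~\ref{prop:cospanfrob}, holding because a discrete $(\Sigma,\col)$-labelled hypergraph is just a colour-indexed family of finite sets and colour-preserving cospans between them factor over the colours.

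Next I would establish fullness. Given a cospan $\ord{n_1} \tr{f} G \tl{g} \ord{n_2}$ in $\FTerm{\Sigma,\col}$, list the hyperedges $e_1,\dots,e_k$ of $G$ with labels $o_1,\dots,o_k \in \Sigma$ and consider the parallel composite $\synTosem{o_1 \tns \cdots \tns o_k}$: by definition of $\synTosem{\cdot}$ this cospan carries exactly those $k$ hyperedges, with every incident node exposed (sources on the left interface, targets on the right) and no identifications among them. The carrier $G$ is then recovered by a purely combinatorial wiring step that merges the exposed nodes according to the incidence of $G$ and connects them to $f$ and $g$; using the compact closed structure (Proposition~\ref{prop:compactclosed}) one may also fuse a source node with a target node. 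By the key lemma every such wiring lies in the image of $\frobTosem{\cdot}$, so the given cospan is $\allTosem{\cdot}$ of a composite of a $\syntax{\Sigma,\col}$-arrow with a $\frob{C}$-arrow, proving surjectivity on hom-sets.

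Finally, faithfulness, which I expect to be the main obstacle. The strategy is to turn the decomposition above into a normal form for arrows of $\syntax{\Sigma , \col} + \frob{C}$: using the coproduct presentation of Proposition~\ref{th:syncharacterisation} together with the symmetric monoidal and Frobenius laws, every term is rewritten into the shape (a parallel product of $\Sigma$-operations, listed in a fixed order) $\poi$ (a purely Frobenius wiring). One then shows $\allTosem{\cdot}$ is injective on such normal forms: the hyperedges of the image carrier determine the multiset of operations, hence the $\Sigma$-part up to the symmetries already absorbed, while the cospan structure on the nodes determines the wiring part, which by the key lemma is reflected faithfully back into $\frob{C}$. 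The delicate points — and exactly where the single-sorted argument of \cite{BGKSZ-lics16} must be transported with care — are proving that the Frobenius and SMC equations are \emph{complete} for this normal form, so that no connectivity data is lost and no spurious identifications are forced, and that reordering the hyperedges of $G$ corresponds precisely to composing with symmetries, so that the normal form is well defined. Combining well-definedness, fullness and this injectivity gives the claimed isomorphism of props.
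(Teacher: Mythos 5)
Your fullness argument is essentially the paper's: the same decomposition of a cospan into (Frobenius wiring) $\poi$ (disjoint pile of hyperedges) $\poi$ (Frobenius wiring), and your ``key lemma'' that cospans with discrete carrier are exactly the image of $\frobTosem{\cdot}$ is precisely the Claim inside the paper's proof, obtained colour-by-colour from Proposition~\ref{prop:cospanfrob}. The genuine gap is in the faithfulness half. You reduce injectivity to two unproven assertions: that every arrow of $\syntax{\Sigma,\col}+\frob{\col}$ can be rewritten, using only the SMC and Frobenius axioms, into the normal form (parallel product of $\Sigma$-operations) $\poi$ (pure Frobenius wiring), and that this normal form is unique up to the permutations absorbed by the symmetries. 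You correctly flag this as the delicate point, but it is not a detail to be transported from \cite{BGKSZ-lics16}: it \emph{is} the content of the theorem. An arrow of the coproduct is by construction an arbitrary alternation of $\Sigma$-layers and Frobenius layers, and showing that the equations suffice to commute every Frobenius generator past every $\Sigma$-operation into a single pre/post wiring, without losing or creating connectivity, is exactly the completeness statement the theorem encodes. As written, the faithfulness step is circular, or at best an unproven coherence lemma.

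The paper avoids this entirely by never arguing on the syntactic side: it shows that $\FTerm{\Sigma,\col}$ itself satisfies the universal property of the coproduct $\syntax{\Sigma,\col}+\Sigma_{c\in\col}\frob{\{c\}}$ in $\PROP{\col}$. For that one only needs that every cospan decomposes \emph{essentially uniquely} --- up to the choice of orderings of nodes and hyperedges, which the symmetric monoidal structure renders immaterial --- into constituents lying in the images of $\synTosem{\cdot}$ and the $\frobTosem{\cdot}_{c}$; this both defines the mediating functor $\gamma$ into an arbitrary target and forces its uniqueness. Since $\syntax{\Sigma,\col}+\frob{\col}$ satisfies the same universal property by definition, the canonical comparison $\allTosem{\cdot}$ is an isomorphism, and faithfulness is a consequence rather than something proved head-on (note the paper even derives faithfulness of $\synTosem{\cdot}$ separately, by an amalgamation argument, rather than from a normal form). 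If you want to keep your fully-faithful strategy you must actually prove the normal-form/completeness lemma, which is a substantial coherence argument; the universal-property route is the way to sidestep it.
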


The proof of the theorem will be postponed to the end of the section. Let us first observe two interesting consequences.

\begin{corollary}\label{cor:freehyp=fterm} $\FTerm{\Sigma , \col}$ is the free hypergraph category on $(\Sigma, \col)$, i.e. $\freehyp{\Sigma,\col} \cong \FTerm{\Sigma , \col}$. \end{corollary}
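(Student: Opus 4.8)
The plan is to obtain the claimed isomorphism by composing the two isomorphisms already established and then checking that the composite respects the Frobenius structure, so that it is an isomorphism of hypergraph categories and not merely of $\col$-coloured props. First I would invoke Proposition~\ref{th:syncharacterisation}, which gives an isomorphism of hypergraph categories $\freehyp{\Sigma,\col}\cong\syntax{\Sigma,\col}+\frob{C}$ and, crucially, identifies $\syntax{\Sigma,\col}+\frob{C}$ as the free hypergraph category on $(\Sigma,\col)$. Then I would invoke Theorem~\ref{th:characterisation}, which supplies a prop isomorphism $\allTosem{\cdot}\:\syntax{\Sigma,\col}+\frob{C}\to\FTerm{\Sigma,\col}$. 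Composing these yields a prop isomorphism $\freehyp{\Sigma,\col}\cong\FTerm{\Sigma,\col}$, and the whole substance of the corollary reduces to upgrading this to an isomorphism of hypergraph categories.

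To do that I would transport the Frobenius structure of $\syntax{\Sigma,\col}+\frob{C}$---which lives entirely on the $\frob{C}$ summand---across $\allTosem{\cdot}$. Since $\allTosem{\cdot}$ restricts to $\frobTosem{\cdot}$ on that summand, the resulting Frobenius maps on $\FTerm{\Sigma,\col}$ are exactly the cospans exhibited in \eqref{eq:frobTosem}, one family per colour $c\in\col$. Because an isomorphism of props preserves and reflects every equation between parallel arrows, these transported maps automatically verify the commutative (co)monoid laws \eqref{eq:multc}--\eqref{eq:comultc}, the Frobenius and separability laws \eqref{eq:frobc}, and the compatibility with $\tns$; no fresh diagrammatic calculation is required.

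Finally I would conclude by noting that, with this structure, $\allTosem{\cdot}$ is an isomorphism of hypergraph categories, and hence so is the composite $\freehyp{\Sigma,\col}\cong\FTerm{\Sigma,\col}$. As freeness is a universal property stable under isomorphism, the freeness of $\syntax{\Sigma,\col}+\frob{C}$ established in Proposition~\ref{th:syncharacterisation} transfers to $\FTerm{\Sigma,\col}$. I do not anticipate any serious obstacle: the only point demanding attention is confirming that the transported Frobenius structure coincides with the combinatorially natural cospans of \eqref{eq:frobTosem}, so that $\FTerm{\Sigma,\col}$ becomes a hypergraph category in the intended concrete sense, and this is immediate from the definition of $\frobTosem{\cdot}$.
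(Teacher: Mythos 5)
Your proposal is correct and takes essentially the same route as the paper's own (very brief) proof: the paper likewise equips $\FTerm{\Sigma,\col}$ with the Frobenius structure given by $\frobTosem{\cdot}$, observes that the isomorphism of Theorem~\ref{th:characterisation} thereby extends to one of hypergraph categories, and concludes via Proposition~\ref{th:syncharacterisation}. Your write-up simply makes the transport-of-structure and transfer-of-freeness steps explicit.
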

\begin{proof} $\frobTosem{\cdot} \: \frob{C} \to \FTerm{\Sigma , \col}$ defines an hypergraph category structure on $\FTerm{\Sigma , \col}$ and the isomorphism of Theorem \ref{th:characterisation} extends to one of hypergraph categories. Then the result follows by Proposition~\ref{th:syncharacterisation} and Theorem \ref{th:characterisation}. \end{proof}

The next corollary states that there is no `information loss' in passing from the free symmetric monoidal category to the free hypergraph category on $(\Sigma,\col)$.
\begin{corollary} \label{cor:faithful} $\synTosem{\cdot}  \: \syntax{\Sigma , \col} \to \FTerm{\Sigma,\col}$ is faithful. \end{corollary}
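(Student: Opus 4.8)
The plan is to reduce faithfulness of $\synTosem{\cdot}$ to faithfulness of a coproduct coprojection, and then to exploit the fact that the defining equations of $\freehyp{\Sigma,\col}$ touch only the Frobenius generators. By definition $\allTosem{\cdot}$ is the copairing of $\synTosem{\cdot}$ and $\frobTosem{\cdot}$, so the universal property of the coproduct gives $\synTosem{\cdot} = \allTosem{\cdot} \circ \iota_1$, where $\iota_1 \: \syntax{\Sigma,\col} \to \syntax{\Sigma,\col}+\frob{\col}$ is the first coprojection. Since $\allTosem{\cdot}$ is an isomorphism by Theorem~\ref{th:characterisation}, it is faithful, and $\synTosem{\cdot}$ is faithful if and only if $\iota_1$ is. So the whole problem is to show that the coprojection into the coproduct is faithful.

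To analyse $\iota_1$ I would use the presentation of the coproduct. By Example~\ref{ex:perm} the prop $\perm{\col}$ is initial in $\PROP{\col}$, hence $\syntax{\Sigma,\col}+\frob{\col}$ is the pushout of $\syntax{\Sigma,\col} \tl{} \perm{\col} \tr{} \frob{\col}$; as recorded before the statement this is also a pushout in $\SMC$, and by Proposition~\ref{th:syncharacterisation} it is $\freehyp{\Sigma,\col}$. Reading off the presentation, arrows of $\syntax{\Sigma,\col}+\frob{\col}$ are $(\Sigma \uplus \Sigmafrob{\col})$-terms quotiented by the laws of SMCs together with the Frobenius equations \eqref{eq:multc}, \eqref{eq:comultc}, \eqref{eq:frobc}, and $\iota_1$ is the inclusion sending a $\Sigma$-term to itself. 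Faithfulness of $\iota_1$ is thus the statement that two $\Sigma$-terms equal modulo the SMC laws \emph{and} the Frobenius equations are already equal modulo the SMC laws alone.

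The key observation is that each Frobenius equation rewrites a subterm built purely from $\Sigmafrob{\col}$-generators into another such subterm. A pure $\Sigma$-term contains no $\Sigmafrob{\col}$-generator, hence exhibits no Frobenius redex, so the only rewrites applicable to it are instances of the SMC laws, which preserve purity. Consequently, along any chain of equational steps witnessing $\iota_1(a)=\iota_1(b)$, the pure terms are connected using SMC laws only, giving $a=b$ in $\syntax{\Sigma,\col}$. Equivalently, and more in the spirit of the combinatorial model, $\synTosem{a}$ is the cospan whose carrier has one hyperedge per operation occurrence of $a$ and one node per wire; such a cospan is acyclic and each node is attached to at most one source- and one target-tentacle, and from this data one recovers $a$ modulo SMC by any topological ordering of the hyperedges. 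Either route yields injectivity on hom-sets.

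The step I expect to be the main obstacle is this last one: verifying rigorously that the congruence generated by the Frobenius equations cannot connect two distinct SMC-classes of pure $\Sigma$-terms through intermediate non-pure terms. The redex observation handles this in principle---Frobenius rules never create $\Sigma$-structure and fire only on Frobenius redexes---but turning it into a clean argument requires either a careful induction on rewrite chains or, on the combinatorial side, a proof that SMC-equivalence of $\Sigma$-terms corresponds exactly to isomorphism of their associated acyclic, ``monogamous'' cospans, so that the read-back is well defined modulo SMC.
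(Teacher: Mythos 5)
Your reduction of the problem to faithfulness of the coprojection $\iota_1 \: \syntax{\Sigma,\col} \to \syntax{\Sigma,\col}+\frob{\col}$ via the isomorphism $\allTosem{\cdot}$ is correct and matches the paper's framing. The gap is in your argument that $\iota_1$ is faithful. Your key claim --- that a pure $\Sigma$-term exhibits no Frobenius redex, so only SMC laws apply to it and purity is preserved along a derivation --- is false, because equational derivations are symmetric. The (co)unit laws \eqref{eq:multc}--\eqref{eq:comultc} equate a composite of Frobenius generators with a bare identity wire; read right-to-left they introduce Frobenius generators onto any identity wire, and every nontrivial $\Sigma$-term contains such wires. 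Hence a chain of equational steps witnessing $\iota_1(a)=\iota_1(b)$ can (and in general will) leave the pure fragment and return, and the question of whether such an excursion can connect two distinct SMC-classes is precisely the content of the corollary, not something the redex observation settles. You flag this yourself as the main obstacle, but the proposal offers no way past it: the ``careful induction on rewrite chains'' you mention has no evident induction measure, and the combinatorial alternative (that $\synTosem{\cdot}$ lands in acyclic, monogamous cospans from which the term can be read back up to SMC-equivalence) is itself a substantial theorem --- essentially the characterisation of free props from the single-sorted literature --- that would need to be proved, not assumed.

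The paper avoids this entirely by a categorical argument: the coproduct $\syntax{\Sigma,\col}+\frob{\col}$ is computed as a pushout in $\SMC$ along the initial prop $\perm{\col}$, all functors involved are identity-on-objects, the legs out of $\perm{\col}$ are faithful, and they satisfy the 3-for-2 property for trivial reasons (every arrow of $\perm{\col}$ is an isomorphism). An amalgamation theorem for such pushouts in $\CAT$ then yields faithfulness of the pushout injections directly, with no analysis of equational derivations. If you want to salvage your route, you would need to actually prove the monogamous-acyclic-cospan characterisation of $\syntax{\Sigma,\col}$; otherwise the amalgamation argument is the cheaper path.
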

\begin{proof}
We use that, just as for props \cite[Prop.~2.8]{ZanasiThesis}, coproducts of $\col$-coloured props can be computed as certain pushouts in the category $\SMC$ of small SMCs. In particular, $\syntax{\Sigma , \col} + \frob{\col}$ in $\PROP{\col}$ arises as

 \begin{equation}\label{eq:pushout}
 \vcenter{
\xymatrix@R=20pt@C=40pt{
 \ar[d]_{!_1} \perm{\col}  \ar[r]^-{!_2} \drcorner & \frob{C} \ar[d]^-{\frobTosem{\cdot}}\\
\ar[r]_-{\synTosem{\cdot}} \syntax{\Sigma , \col} & {\syntax{\Sigma , \col} + \frob{\col} \cong \FTerm{\Sigma,\col}}
}
}
\end{equation}
in $\SMC$, where the maps $!_1$ and $!_2$ are given by initiality of $\perm{\col}$ in $\PROP{\col}$ (see Example \ref{ex:perm}). Intuitively, in \eqref{eq:pushout} $\syntax{\Sigma , \col} + \frob{\col}$ is built as the ``disjoint union'' of $\syntax{\Sigma , \col}$ and $\frob{\col}$ where one identifies the set of objects $\col^{\star}$ and the associated symmetric monoidal structure (the ``contribution'' of $\perm{\col}$).

Now, in order to prove that $\synTosem{\cdot}$ is faithful, we can use a result~\cite[Th.~3.3]{MacDonald2009} about amalgamation in $\CAT$ (which transfers to $\SMC$). As all the functors in \eqref{eq:pushout} are identity-on-objects and $!_1$, $!_2$ are faithful, it just requires to show that $!_1$ and $!_2$ satisfy the so-called 3-for-2 property: for $!_1$, this means that, given $h= f \poi g$ in $\syntax{\Sigma , \col}$, if any two of $f,g,h$ are in the image of $!_1$, then so is the third. This trivially holds as every arrow of $\perm{\col}$ is an isomorphism. The argument for $!_2$ is identical. 
\end{proof}

\begin{example}\label{ex:bipartite2}
We come back to the free hypergraph category $\freehyp{\Sigma,\{\sr,\sg\}}$ introduced in Example \ref{ex:bipartite2}. By Corollary \ref{cor:freehyp=fterm}, $\freehyp{\Sigma,\{\sr,\sg\}} \cong \FTerm{\Sigma,\{\sr,\sg\}}$. In hypergraphs of $\FTerm{\Sigma,\{\sr,\sg\}}$, hyperedges correspond to switches $\cgr{bgswitch.pdf}$ or $\cgr{gbswitch.pdf}$, thus they are in fact edges (one input and one output node) and we may as well avoid drawing them in the graphical representation. Since they connect any two nodes only when these have a different colour, what we obtain are finite directed bipartite graphs. For instance, reprising \eqref{eq:bipartite}:
\[ \cgr{exbipartite.pdf} \qquad \xmapsto{\allTosem{\cdot}}\qquad \cgr{exbipartitegraph.pdf}\]
This example shows that Theorem \ref{th:characterisation} not only provides a combinatorial representation for algebraic structures, but conversely it is also instrumental in deriving an algebraic presentation for well-known graph-theoretic models.
\end{example}

We now give a proof of the characterisation theorem.

\begin{proof}[Proof of Theorem \ref{th:characterisation}]
As a preparatory step, we observe that $\frob{\col}$ itself can be decomposed as a coproduct in $\PROP{\col}$, namely $\Sigma_{c \in \col} \frob{\{c\}}$. Let us suppose for simplicity that $\col = \{c,d\}$, so that $\frob{\col} = \frob{\{c\}} + \frob{\{d\}}$. It will be apparent how the argument generalises. 

The prop morphism $\frobTosem{\cdot} \: \frob{\col} \to \FTerm{\Sigma, \col}$ is analogously decomposable as the copairing $\copair{\frobTosem{\cdot}_{c},\frobTosem{\cdot}_{d}} \: \frob{\{c\}} + \frob{\{d\}} \to \FTerm{\Sigma,\col}$ of prop morphisms $\frobTosem{\cdot}_{c} \: \frob{\{c\}} \to \FTerm{\Sigma,\col}$ and $\frobTosem{\cdot}_{d} \: \frob{\{d\}} \to \FTerm{\Sigma,\col}$, defined by restricting the clauses \eqref{eq:frobTosem} to the associated colour, either $c$ or $d$.

We have thus reduced the statement to verify that $\FTerm{\Sigma , \col}$ satisfies the universal property of the coproduct $\syntax{\Sigma,\col} + \frob{\{c\}} + \frob{\{ d \}}$ in $\PROP{\col}$. 
\begin{equation} \label{eq:univcoproductgen}
\vcenter{\xymatrix@R=5pt@C=45pt{
&&&& \ar[dll]_-{\frobTosem{\cdot}_{d}} \ar@/_8pt/[dddll]^<<<<<<{\beta_d} {\frob{\{d\}}} \\
{\syntax{\Sigma,\col}} \ar[ddrr]_\alpha \ar[rr]^-{\synTosem{\cdot}} && 
{\FTerm{\Sigma,\col}} \ar@{.>}[dd]^\gamma \\
&&&& \ar[ull]_<<<<<<{\frobTosem{\cdot}_{c}} {\frob{\{c\}}} \ar[dll]^{\beta_c} & \\
&& {\catA}
}}
\end{equation}
 Given $\alpha$, $\beta_c$, $\beta_d$ and a $\col$-coloured prop $\catA$ as in \eqref{eq:univcoproductgen}, we need to show the existence of a unique $\gamma$ making the diagram commute. Now, because all morphisms in \eqref{eq:univcoproductgen} are identity-on-objects, it suffices to show that any arrow of $\FTerm{\Sigma , \col}$ can be decomposed in an essentially unique way into an expression where all the basic constituents lie in the image of $\synTosem{\cdot}$, $\frobTosem{\cdot}_{c}$ or $\frobTosem{\cdot}_{d}$. 
 
To this aim, fix a cospan $w \tr{f} G \tl{g} v$ in $\FTerm{\Sigma , \col}$, where $G$ has set of nodes $N$, set of hyperedges $E$ and labelling functions $\psi \: N \to \col$ and $\chi \: E \to \Sigma$. We pick an order $e_1,\dots,e_j$ on the hyperedges in $E$ and one $n_1,\dots,n_k$ on the nodes in $N$. Let $\tilde{w} \tr{i} \tilde{E} \tl{o} \tilde{v}$ be the cospan defined as $\bigoplus_{1 \leq i \leq j} \synTosem{\chi(e_i)}$. Intuitively, $\tilde{E}$ piles up all the hyperdges of $G$, but disconnected from each other. $\tilde{w}$ and $\tilde{m}$ are the word concatenations of all the inputs, respectively outputs of these hyperedges. 

Similarly, we pile up all the (labeled) nodes in $N$, by forming the word $w_N \in \col^{\star}$ as $\bigoplus_{1 \leq i \leq k} \psi(n_i)$. There are obvious functions from $w$, $v$, $\tilde{w}$ and $\tilde{v}$ to $w_N$, mapping labelled nodes to their occurrence in $w_N$. All this information is now gathered in the following composition of cospans \footnote{We admit a certain degree of sloppiness in writing $w_N$ both for an object of $\FTerm{\Sigma,\col}$ and for the carrier of a cospan. For $w_N = \{n_1, \dots, n_k\}$, these are isomorphic descriptions of the same data: in the first case it is treated as a word in $\col^{\star}$, in the second as a set of nodes with a labelling function $\chi \: \{ n_1, \dots, n_k \} \to \col$. }
\begin{equation}\label{eq:decomp}
(w \tr{f}  w_N \tl{\copair{id, j}} w_N \tns \tilde{w}) \ \poi \ (w_N \tns \tilde{w} \tr{id \tns i} w_N \tns \tilde{E} \tl{id \tns o} w_N \tns \tilde{v}) \ \poi \ (w_N \tns \tilde{v} \tr{\copair{id, p}} w_N \tl{g} v)
\end{equation}
Copairing maps $\copair{id, j}$ and $\copair{id, p}$ are well-defined as $\tns$ is also a coproduct in $\Hyp{\Sigma,\col}$. One can compute that the result of composing \eqref{eq:decomp} (by pushout) is indeed isomorphic to $w \tr{f} g \tl{g} v$.

Towards a definition of $\gamma$, we need to check that every component of \eqref{eq:decomp} is in the image of either $\synTosem{\cdot}$ or $\frobTosem{\cdot} = \copair{\frobTosem{\cdot}_{c},\frobTosem{\cdot}_{d}}$. The middle cospan is clearly in the image of $\synTosem{\cdot}$, as it is the monoidal product of the identity cospan $w_N \tr{id} w_N \tl{id} w_N$ with cospans in the image of some $o \in \Sigma$. Next, we want to check that the two outmost cospans are in the image of $\copair{\frobTosem{\cdot}_{c},\frobTosem{\cdot}_{d}}$. To this aim, we show the following claim.

\begin{claim} Any arrow $u_1 \tr{h} u_3 \tl{q} u_2 $ of $\FTerm{\Sigma,\col}$ with $u_1$, $u_2$, $u_3$ discrete is in the image of $\copair{\frobTosem{\cdot}_{c},\frobTosem{\cdot}_{d}}$. 
\end{claim}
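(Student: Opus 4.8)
The plan is to exploit the fact that a discrete hypergraph is nothing but a finite set of nodes labelled in $\col$, and that the homomorphisms $h$ and $q$, living in the slice category $\Hyp{\Sigma,\col}$, are forced to preserve these labels. Consequently the entire cospan splits as a ``disjoint union'' indexed by the colours in $\col$. Taking $\col = \{c,d\}$ as in the rest of the proof, I would first observe that each of $u_1$, $u_2$, $u_3$ is a word all of whose letters are $c$ or $d$; reordering its nodes so that the $c$-labelled ones precede the $d$-labelled ones rewrites $u_i$ as $u_i^c \tns u_i^d$, with $u_i^c$ purely $c$-coloured and $u_i^d$ purely $d$-coloured. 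Since morphisms of the slice preserve colours, $h$ and $q$ send $c$-nodes to $c$-nodes and $d$-nodes to $d$-nodes, so after this reordering they factor as $h = h^c \tns h^d$ and $q = q^c \tns q^d$. The reordering of the carrier $u_3$ is an isomorphism of the apex, hence yields the \emph{same} arrow of $\FTerm{\Sigma,\col}$, whereas the reorderings of $u_1$ and $u_2$ are permutations, i.e.\ symmetries. These symmetries lie in the image of $\frobTosem{\cdot}$: indeed $\frob{\col}$ is a $\col$-coloured prop and $\frobTosem{\cdot}$ is an identity-on-objects symmetric monoidal functor, so the whole symmetric monoidal structure of $\FTerm{\Sigma,\col}$ is already in its image.

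The outcome of this first step is that, up to pre- and post-composition with symmetries,
\begin{equation*}
(u_1 \tr{h} u_3 \tl{q} u_2) \;=\; (u_1^c \tr{h^c} u_3^c \tl{q^c} u_2^c) \;\tns\; (u_1^d \tr{h^d} u_3^d \tl{q^d} u_2^d),
\end{equation*}
so it suffices to place each monochromatic factor in the image of the corresponding $\frobTosem{\cdot}_c$ or $\frobTosem{\cdot}_d$. This is exactly where the single-sorted result enters. The factor $u_1^c \tr{h^c} u_3^c \tl{q^c} u_2^c$ is a cospan of discrete hypergraphs all of whose nodes carry the colour $c$; forgetting the (constant) labelling identifies it with a plain cospan of finite sets, that is, an arrow of $\Cospan{\FINSET}$. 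By Proposition \ref{prop:cospanfrob} the map $\frob{\{c\}} \to \Cospan{\FINSET}$ is an isomorphism, and under this identification it coincides with $\frobTosem{\cdot}_c$ (both are fixed by their action on the four generators, read off from \eqref{eq:frobTosem} restricted to $c$). Hence $u_1^c \tr{h^c} u_3^c \tl{q^c} u_2^c$ lies in the image of $\frobTosem{\cdot}_c$, and the same argument for colour $d$ places the remaining factor in the image of $\frobTosem{\cdot}_d$.

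Finally I would reassemble the pieces. As $\frobTosem{\cdot} = \copair{\frobTosem{\cdot}_c, \frobTosem{\cdot}_d}$ is a symmetric monoidal functor, its image is closed under the monoidal product $\tns$ and under composition with the symmetries used above. The two monochromatic factors and the reordering symmetries all belong to this image, so their composite — the original arrow $u_1 \tr{h} u_3 \tl{q} u_2$ — does too. For a general colour set $\col$ the same reasoning runs with one monochromatic factor per colour, which is the precise sense in which ``it will be apparent how the argument generalises''.

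The step I expect to be the main obstacle is making the colourwise splitting genuinely precise: one must check that the carrier $u_3$, and not merely the boundaries $u_1$ and $u_2$, decomposes along colours, and that $h$ and $q$ respect this decomposition. This rests entirely on the label-preservation built into morphisms of $\Hyp{\Sigma,\col}$, so the difficulty is careful bookkeeping with permutations and with the fact that cospans are taken up to isomorphism of the apex, rather than anything conceptually deep.
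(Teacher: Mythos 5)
Your proposal is correct and follows essentially the same route as the paper's own proof: sort the nodes of $u_1$, $u_2$, $u_3$ by colour via permutations, observe that colour-preservation splits $h$ and $q$ into monochromatic components, identify each monochromatic cospan with an arrow of $\Cospan{\FINSET}$ lying in the image of $\frobTosem{\cdot}_c$ (resp.\ $\frobTosem{\cdot}_d$) via Proposition \ref{prop:cospanfrob}, and absorb the sorting permutations using the fact that $\copair{\frobTosem{\cdot}_{c},\frobTosem{\cdot}_{d}}$ is a morphism of $\col$-coloured props and hence covers the symmetry structure. Your explicit remark that the reordering of the apex $u_3$ is harmless because cospans are taken up to isomorphism of the carrier is a point the paper leaves implicit, but it is the same argument.
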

\begin{proof} First, find permutations $\pi_1 \: u_1 \to c^k d^z$, $\pi_2 \: u_2 \to c^m d^n$ and $\pi_3 \: u_3 \to c^l d^r$ factorising words $u_1$, $u_2$ and $u_3$ respectively as $c$s followed by $d$s. We can then define restrictions of $h$ and of $q$ to the $c$-segment or the $d$-segment of their domain: this gives functions $h_c \: c^k \to c^l$, $h_d \: d^z \to d^r$, $q_c \: c^m \to c^l$ and $q_d \: d^n \to d^r$. Observe that the codomain is restricted too, as $h$ and $q$ are colour-preserving maps. Putting these data together we can decompose $u_1 \tr{h} u_3 \tl{q} u_2 $ as follows.
\begin{equation}\label{eq:decompFrob}
(u_1 \tr{id}  w_1 \tl{\pi_1} c^k \tns d^z) \ \poi \ (c^k \tns d^z \tr{h_c \tns h_d} c^l \tns d^r \tl{q_c \tns q_d} c^m \tns d^n) \ \poi \ (c^m \tns d^n \tr{\pi_2} u_2 \tl{id} u_2)
\end{equation}
It is now useful to recall Proposition \ref{prop:cospanfrob}. Observe that the bijection given therein between Frobenius structure and cospans in $\FINSET$ is defined by the same clauses \eqref{eq:frobTosem} as $\frobTosem{\cdot} = \copair{\frobTosem{\cdot}_{c},\frobTosem{\cdot}_{d}}$, modulo the labelling of set elements all with $c$ or with $d$. It follows that the cospan $c^k \tr{h_c} c^l \tl{q_c} c^m$ is in the image of $\frobTosem{\cdot}_{c}$ and the cospan $d^z \tr{h_d} d^r \tl{q_d} d^n$ is in the image of $\frobTosem{\cdot}_{d}$. Thus they are both in the image of $\copair{\frobTosem{\cdot}_{c},\frobTosem{\cdot}_{d}}$. Concerning the two outermost cospans in \eqref{eq:decompFrob}, they are also in the image of $\copair{\frobTosem{\cdot}_{c},\frobTosem{\cdot}_{d}}$, as this is a morphism of $\col$-coloured prop and thus preserves and reflects the symmetry structure. Therefore, the whole of \eqref{eq:decompFrob} is in the image of $\copair{\frobTosem{\cdot}_{c},\frobTosem{\cdot}_{d}}$.\end{proof} 

Back to the main proof, thanks to the claim we have shown that the two outmost cospans of \eqref{eq:decomp} are in the image of $\copair{\frobTosem{\cdot}_{c},\frobTosem{\cdot}_{d}}$. Therefore $\gamma$ can be defined on $w \tr{f} G \tl{g} v$ by the values of $\synTosem{\cdot}$ and $\copair{\frobTosem{\cdot}_{c},\frobTosem{\cdot}_{d}}$ on its decomposition as in \eqref{eq:decomp}. This is a correctly and uniquely defined assignment: in the construction of decompositions \eqref{eq:decomp} and \eqref{eq:decompFrob}, the only variable parts are the different orderings that are picked for labelled nodes and for hyperedges in $\tilde{E}$, but these are immaterial since all the involved categories are symmetric monoidal.
\end{proof}



As a consequence of the claim enclosed in the proof of Theorem \ref{th:characterisation}, it is worth noticing that the Frobenius structure identifies the hypergraphs with no hyperedges, i.e. the sets of $\col$-labelled nodes.
\begin{corollary} There is an isomorphism of $\col$-coloured props $\frob{\col} \cong \FTerm{\emptyset,\col}$. \end{corollary}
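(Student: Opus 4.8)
The plan is to deduce the corollary as the special case $\Sigma = \emptyset$ of Theorem~\ref{th:characterisation}. The only preparatory observation needed is that, for the empty signature, the free prop $\syntax{\emptyset,\col}$ is generated by the monoidal theory $(\emptyset,\col)$, which by Example~\ref{ex:perm} is exactly the permutation prop $\perm{\col}$, the initial object of $\PROP{\col}$. Instantiating Theorem~\ref{th:characterisation} at $\Sigma = \emptyset$ then gives an isomorphism of $\col$-coloured props $\allTosem{\cdot} \: \syntax{\emptyset,\col} + \frob{\col} \to \FTerm{\emptyset,\col}$, so everything reduces to simplifying its source.

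First I would use that an initial object is a unit for the coproduct: since $\syntax{\emptyset,\col} \cong \perm{\col}$ is initial in $\PROP{\col}$, the coproduct injection $\iota \: \frob{\col} \to \syntax{\emptyset,\col} + \frob{\col}$ is itself an isomorphism of $\col$-coloured props. Composing it with $\allTosem{\cdot}$, and recalling that $\allTosem{\cdot} = \copair{\synTosem{\cdot}, \frobTosem{\cdot}}$ so that precomposition with the second injection recovers the second component, I would conclude that $\frobTosem{\cdot} = \iota \poi \allTosem{\cdot} \: \frob{\col} \to \FTerm{\emptyset,\col}$ is an isomorphism. This is precisely the desired statement, and it usefully identifies the witnessing isomorphism as the Frobenius interpretation $\frobTosem{\cdot}$ itself.

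As a cross-check, and in line with the remark preceding the corollary, I would verify the surjectivity half directly from the Claim inside the proof of Theorem~\ref{th:characterisation}. Because $\Sigma = \emptyset$, the slice target $(\emptyset,\col)$ carries no hyperedges; as hypergraph homomorphisms send hyperedges to hyperedges, every object of $\Hyp{\emptyset,\col}$ --- and hence every carrier of a cospan in $\FTerm{\emptyset,\col}$ --- must be discrete. The Claim then says any such cospan lies in the image of $\frobTosem{\cdot}$, so this functor is full; combined with its being identity-on-objects and faithful (both inherited from the isomorphism just established), one recovers that it is an isomorphism. I do not anticipate a genuine obstacle, since the corollary is a direct specialisation: the only point warranting a word of justification is the initiality of $\syntax{\emptyset,\col}$, which makes the coproduct collapse and is immediate from Example~\ref{ex:perm}. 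A minor care point is the bookkeeping of colours --- the Claim and the decompositions in the theorem are written for $\col = \{c,d\}$ but generalise verbatim --- so that the argument applies to an arbitrary finite $\col$.
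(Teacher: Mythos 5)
Your proposal is correct and matches the paper's (very brief) justification: the paper presents the corollary as an immediate consequence of the Claim inside the proof of Theorem~\ref{th:characterisation}, which is exactly your cross-check that every cospan in $\FTerm{\emptyset,\col}$ has discrete carrier and hence lies in the image of $\frobTosem{\cdot}$. Your primary route --- collapsing the coproduct $\syntax{\emptyset,\col}+\frob{\col}$ using the initiality of $\perm{\col}$ and reading off $\frobTosem{\cdot}=\iota\poi\allTosem{\cdot}$ --- is a clean and equivalent way of specialising the theorem to $\Sigma=\emptyset$, so there is no substantive difference in approach.
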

As arrows of $\FTerm{\emptyset,\col}$ are the same thing as cospans in the slice category $\FINSET \setminus \col$, this corollary can be seen as a multi-sorted analogue of the well-known result reported in Proposition \ref{prop:cospanfrob}.

\paragraph{Double-pushout rewriting with interfaces} We conclude this section by recalling double-pushout (DPO) rewriting~\cite{HandbookDPO}, that we will use to compute in $\FTerm{\Sigma,\col}$. We will actually use a variation \cite{Gadducci1998} of the standard definition: instead of just rewriting an hypergraph $G$, we shall rewrite an hypergraph homomorphism $G \tl{} J$, standing for ``$G$ with interface $J$''. The intuition is that this form of computation retains the information that $J$ is how $G$ ``glues'' to a bigger context. This is needed both to match the syntactic notion of rewriting (Definition \ref{defn:rewprop}) and for ensuring decidability of confluence for terminating system (Section \ref{sec:confluence}). We formulate our definition at the level of \emph{adhesive categories} \cite{Lack2005}. This is the more abstract setting where DPO rewriting enjoys desirable properties (such as Church-Rosser) and where tools and algorithms for this form of rewriting are generally defined. 

\begin{definition}[DPOI rewriting] \label{defn:dpo}
Fix an adhesive category $\catA$. A \emph{rule for double-pushout rewriting with interfaces (DPOI rule)} is a span $L \tl{} J \tr{} R$  in $\catA$.
A \emph{DPOI rewriting system} $\RS$ is a finite set of DPOI rules.
Given morphisms $G \tl{r} I$ and $H \tl{p} I$ in $\catA$, we say that $G$ rewrites into $H$ via $\RS$ with interface $I$, notation $(G \tl{r} I) \DPOstep{\RS} (H \tl{p} I)$, if there exists a DPOI rule $L \tl{ } J \tr{} R$ in $\RS$ and a cospan
$J \tr{} C \tl{q} I$ such that the following diagram commutes and the two squares are pushouts. We call $L \to G$ a \emph{match} of $L$ in~$G$. 
\begin{equation}\label{eq:dpo2}
\raise25pt\hbox{$
\xymatrix@R=10pt@C=20pt{
L \ar[d]   & J \ar[d]
\ar@{}[dl]|(.8){\text{\large $\urcorner$}}
\ar@{}[dr]|(.8){\text{\large $\ulcorner$}}
\ar[l] \ar[r]  & R \ar[d] \\
 G &  C \ar[l] \ar[r]  & H \\
&  I \ar[u]^q \ar[ur]_p  \ar[ul]^r
}$}
\end{equation}
If $\catA$ has an initial object $0$, one can relax rewriting to act on cospans of the form $0 \tr{} G \tl{r} I$, seen as objects with interface $(G \tr{r} I)$. In this case we write $(0 \tr{} G \tl{r} I) \DPOstep{\RS} (0 \tr{} H \tl{p} I)$ for a rewriting step. 
\end{definition}

The following makes DPOI rewriting possible in $\catA = \Hyp{\Sigma,\col}$, see Ex. \ref{ex:DPOrewriting} below for an illustration.
\begin{proposition}\label{prop:adhesive} $\Hyp{\Sigma , \col}$ is an adhesive category. \end{proposition}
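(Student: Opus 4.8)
The plan is to obtain adhesivity structurally, via closure properties of the class of adhesive categories, rather than by verifying the van Kampen condition by hand. The strategy has two moves: first exhibit the category $\Hyp{}$ of finite directed (unlabelled) hypergraphs as (the finite part of) a presheaf category, which is automatically adhesive; then observe that $\Hyp{\Sigma,\col}$ is a slice of $\Hyp{}$ and appeal to the closure of adhesive categories under slicing \cite{Lack2005}.

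For the first move, I would describe a small index category $\mathbf{I}$ with one object $\star$ standing for nodes and one object $h_{k,l}$ for each arity $(k,l)\in\N\times\N$ standing for hyperedges with $k$ ordered source tentacles and $l$ ordered target tentacles, together with exactly $k+l$ arrows $\star\to h_{k,l}$ recording the attachments. A presheaf $F\colon\mathbf{I}^{\mathrm{op}}\to\mathbf{Set}$ then assigns to $\star$ a set of nodes $F(\star)$, to each $h_{k,l}$ a set of hyperedges $F(h_{k,l})$, and turns the $k+l$ generating arrows into the source/target attachment functions $F(h_{k,l})\to F(\star)$; a natural transformation is precisely a hypergraph homomorphism. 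Thus $\Hyp{}$ is the full subcategory of finite objects in $[\mathbf{I}^{\mathrm{op}},\mathbf{Set}]$. Since $\mathbf{Set}$ is adhesive and adhesive categories are closed under functor categories \cite{Lack2005}, the presheaf category $[\mathbf{I}^{\mathrm{op}},\mathbf{Set}]$ is adhesive.

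For the second move, I would recall that labelling by $(\Sigma,\col)$ was defined exactly as the slice $\Hyp{}\setminus(\Sigma,\col)$, whose objects are homomorphisms $G\to(\Sigma,\col)$ into the hypergraph presenting the signature. Since adhesive categories are closed under taking slices \cite{Lack2005}, $\Hyp{\Sigma,\col}=\Hyp{}/(\Sigma,\col)$ is adhesive, which is the claim. The step I would treat as the main obstacle is checking that the finiteness restriction is harmless: one must confirm that pullbacks and pushouts along monomorphisms are computed pointwise in $\mathbf{Set}$, hence preserve finiteness and stay inside the full subcategory of finite presheaves, so that the van Kampen squares of $[\mathbf{I}^{\mathrm{op}},\mathbf{Set}]$ remain van Kampen there. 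This parallels the standard argument that finite graphs form an adhesive category, and indeed the whole proposition is the multi-sorted analogue of the single-sorted statement already established in \cite{BGKSZ-lics16}, from which the present proof differs only in letting the colours range over a set $\col$ with more than one element.
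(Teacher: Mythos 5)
Your proof is correct and follows essentially the same route as the paper: $\Hyp{}$ is (the finite part of) a presheaf category, hence adhesive, and $\Hyp{\Sigma,\col}$ inherits adhesivity as a slice. Your explicit index category and the check that finiteness is preserved by the relevant (co)limits are welcome elaborations of details the paper leaves implicit, but they do not change the argument.
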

\begin{proof} 
The category $\Hyp{}$ of finite directed (unlabeled) hypergraphs is a presheaf category and thus adhesive \cite{Lack2005,BGKSZ-lics16}. The statement then follows because $\Hyp{\Sigma , \col}$ is defined as the slice category $\Hyp{} \setminus (\Sigma, \col)$ and adhesive categories are closed under slice~\cite{Lack2005}.
\end{proof}



\section{DPOI Implementation of Rewriting Modulo Frobenius} \label{sec:rewriting}

The category $\Hyp{\Sigma,\col}$ has an initial object $0$: the hypergraph with neither nodes nor hyperedges. Therefore, as mentioned in Definition \ref{defn:dpo}, we can equivalently think of hypergraphs with interface $(G \tr{} I)$ as cospans $0 \tr{} G \tl{} I$, and meaningfully define DPOI rewriting on the morphisms with source $0$ in $\FTerm{\Sigma,\col}$. 

On the other hand, our semantics $\allTosem{\cdot}$ maps diagrams of $\syntax{\Sigma}+\frob{\col}$ to cospans with any source. Thus, in order to interpret syntactic rewriting, we need an intermediate step where we ``fold'' the two interfaces $w_1, w_2$ of a string diagram $a \: w_1 \to w_2$ into one $w_1 w_2$. This is formally defined, with the help of the compact closed structure on $\syntax{\Sigma}+\frob{\col}$ (Proposition \ref{prop:compactclosed}), by an operation $\rewiring{\cdot}$:
\vspace{-.2cm}$$\cgr{rewiringA.pdf} \quad \df \quad \cgr{rewiringA2.pdf}$$

We are now ready to formulate the correspondence theorem between syntactic rewriting in $\syntax{\Sigma}+\frob{\col}$ and DPOI rewriting in $\FTerm{\Sigma}$.

\begin{theorem}\label{thm:frobeniusrewriting}
Let $\rrule{l}{r}$ be any rewriting rule on $\syntax{\Sigma,\col}+\frob{\col}$. Then,
\[
a \Rightarrow_{\rrule{l}{r}} b  \quad \text{ iff } \quad \allTosem{\rewiring{a}} \DPOstep{\allTosem{\rrule{\rewiring{l}}{\rewiring{r}}}}  \allTosem{\rewiring{b}}\text{ .}
\]
\end{theorem}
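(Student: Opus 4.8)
The plan is to push both sides of the equivalence through the isomorphism $\allTosem{\cdot}$ of Theorem~\ref{th:characterisation} and to recognise, inside the combinatorial model $\FTerm{\Sigma,\col}$, a syntactic match as a DPO context. The governing principle is that $\allTosem{\cdot}$ is a prop morphism into a category of cospans, so sequential composition $\poi$ is carried to pushout of cospans; hence ``plugging a subterm into a context'' becomes ``gluing a subhypergraph onto a context hypergraph along a shared interface'', which is exactly what the two pushout squares of~\eqref{eq:dpo2} express. Adhesiveness of $\Hyp{\Sigma,\col}$ (Proposition~\ref{prop:adhesive}) guarantees that DPOI rewriting is well behaved in the target.

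First I would record how the folding $\rewiring{\cdot}$ interacts with the semantics. Since $\rewiring{\cdot}$ is assembled from the cups and caps of the compact closed structure (Proposition~\ref{prop:compactclosed}), and these are interpreted combinatorially as cospans that merely identify interface nodes without creating hyperedges, $\allTosem{\rewiring{a}}$ is the cospan $0 \tr{} \allTosem{a} \tl{} w_1 w_2$: the same carrier as $\allTosem{a}$, with both interfaces folded into one right-hand interface $I = w_1 w_2$. This identifies the object-with-interface on the right of the statement with (the image of) $a$, and turns $\allTosem{\rrule{\rewiring{l}}{\rewiring{r}}}$ into a span $L \tl{} J \tr{} R$ with $L = \allTosem{l}$, $R = \allTosem{r}$ and common interface $J = v_1 v_2$.

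For the forward direction I would start from a decomposition $a = a_1 \poi (l \tns \id_x) \poi a_2$ and $b = a_1 \poi (r \tns \id_x) \poi a_2$ witnessing the rewrite (Definition~\ref{defn:rewprop}; symmetries absorbed into $a_1,a_2$). Folding $a_1$, $a_2$ together with the pass-through wires $\id_x$ yields a single arrow with two interfaces whose interpretation is a cospan $J \tr{} C \tl{} I$, the candidate DPO context. Because $\poi$ is pushout of cospans, gluing $L$ onto $C$ along $J$ reconstructs $\allTosem{\rewiring{a}}$ and gluing $R$ onto $C$ reconstructs $\allTosem{\rewiring{b}}$; these are precisely the left and right pushout squares of~\eqref{eq:dpo2}, and the $I$-legs commute by construction, so the DPO step exists.

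The converse is where the real work lies, and I expect it to be the main obstacle. Given a DPO step with context cospan $J \tr{} C \tl{} I$, I must recover a genuine syntactic decomposition. Here I would use that $\allTosem{\cdot}$ is an isomorphism of props, so $C$ with its two interfaces is the image of a unique context arrow; unfolding this arrow along the compact closed structure---bending $J = v_1 v_2$ and $I = w_1 w_2$ back into domain/codomain positions---recovers $a_1$, $a_2$ and, crucially, the pass-through object $x$, read off from the part of $C$ disjoint from the match. The delicate point is that a match $L \to G$ need not be a monomorphism, so pushout complements are not unique; each admissible $C$ must be shown to unfold to a valid decomposition $a = a_1 \poi (l \tns \id_x) \poi a_2$. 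The pushout $G = L +_J C$ then unfolds to this very equation (and $H = R +_J C$ to the analogue for $b$) by the same ``composition $=$ pushout'' principle, giving $a \Rew{\rrule{l}{r}} b$. Most of the remaining effort is bookkeeping: matching the folded interfaces $I = w_1 w_2$ and $J = v_1 v_2$ correctly and checking, via the isomorphism, that the unfolded context agrees with the original one.
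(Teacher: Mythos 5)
Your proposal follows the paper's proof essentially step for step: the forward direction folds the syntactic match into $\rewiring{a} = \rewiring{l}\poi\tilde{a}$ and reads the two pushout squares off from ``composition of cospans is pushout'', and the converse uses fullness of $\allTosem{\cdot}$ to realise the context cospan as some $\hat{a}$, faithfulness to transfer $\rewiring{a}=\rewiring{l}\poi\hat{a}$ back to syntax, and compact closure to unbend this into a decomposition of the shape~\eqref{eq:rewpropmatch}. The ``delicate point'' you flag about non-mono matches and non-unique pushout complements is not actually an obstacle, since both directions of the statement are existential and every pushout complement yields a valid $\hat{a}$ by the same fullness/faithfulness argument.
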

\begin{proof}
On the direction from left to right, suppose that $a \Rightarrow_{\rrule{l}{r}} b$. Thus, by definition,
\begin{equation}\label{eq:proof1}
\cgr{diagA.pdf} = \cgr{rewl.pdf} \qquad \qquad \cgr{diagB.pdf} = \cgr{rewr.pdf}.
\end{equation}
Using the compact closed structure of $\syntax{\Sigma,\col}+\frob{\col}$ we can put $\rewiring{a}$ in the following shape
$$\cgr{rewiringA.pdf} = \cgr{rewiringA2.pdf} \eql{\eqref{eq:proof1}} \cgr{rewiringA3.pdf} = \cgr{rewiringA4.pdf}$$
The dashed line decomposes the rightmost diagram into $\rewiring{l} \: 0 \to v_1  v_2$ followed by a diagram of type $v_1 v_2 \to w_1 w_2$, which we name $\tilde{a}$. With analogous reasoning,
\begin{gather}\label{eq:rewiringcomp}  
\cgr{rewiringB.pdf} = \cgr{rewiringB2.pdf}\qquad\text{ meaning that}\qquad \rewiring{a} = \rewiring{l} \poi \tilde{a} \text{ and }\rewiring{b} = \rewiring{r}\poi\tilde{a}.
\end{gather}
Next, we introduce cospans giving semantics to the various diagrams:
\begin{equation}\label{eq:defscospans}
\begin{aligned}
\allTosem{\rewiring{l}} = 0 \tr{} L \tl{} v_1v_2 \qquad \allTosem{\tilde{a}} = v_1v_2 \tr{} C \tl{} w_1 w_2 \qquad \allTosem{\rewiring{r}} = 0 \tr{} R \tl{} v_1v_2 \\
\allTosem{\rewiring{a}} = 0 \tr{} G \tl{} w_1 w_2 \qquad \allTosem{\rewiring{b}} = 0 \tr{} H \tl{} w_1 w_2.
\end{aligned}
\end{equation}
Equation \eqref{eq:rewiringcomp} tells that the cospan giving semantics to $\rewiring{a}$ (respectively, $\rewiring{b}$) is the composite of cospans giving semantics to $\rewiring{l}$ (respectively, $\rewiring{r}$) and $\tilde{a}$. As composition of cospans is by pushout, we obtain a double-pushout diagram as in \eqref{eq:dpo2} with $J = v_1v_2$ and $I = w_1w_2$, meaning that $\allTosem{\rewiring{a}} \DPOstep{\allTosem{\rrule{\rewiring{l}}{\rewiring{r}}}}  \allTosem{\rewiring{b}}$.
We now conclude the proof by showing the right to left direction of the statement. Suppose that $\allTosem{\rewiring{a}} \DPOstep{\allTosem{\rrule{\rewiring{l}}{\rewiring{r}}}}  \allTosem{\rewiring{b}}$. Naming cospans $\allTosem{\rewiring{a}}$, $\allTosem{\rewiring{b}}$, $\allTosem{\rewiring{l}}$ and $\allTosem{\rewiring{r}}$ as in \eqref{eq:defscospans}, this implies by definition the existence of a pushout complement $C$ yielding a DPOI diagram as \eqref{eq:dpo2} with $J = v_1v_2$ and $I = w_1w_2$. Now, pick $\hat{a} \: v_1 v_2 \to w_1 w_2$ such that $\allTosem{\hat{a}} = v_1 v_2 \tr{} C \tl{} w_1 w_2$, which exists by fullness of $\allTosem{\cdot}$. Because composition in $\FTerm{\Sigma,\col}$ is by pushout, the existence of such a DPOI diagram yields 
\begin{equation}\label{eq:secondproof}
\begin{aligned}
\allTosem{\rewiring{a}} = (0 \tr{} G \tl{} w_1 w_2) = (0 \tr{} L \tl{} v_1 v_2)\poi(v_1 v_2 \tr{} C \tl{} w_1 w_2) =   \allTosem{\rewiring{l}}\poi\allTosem{\hat{a}} \\ 
\allTosem{\rewiring{b}} = (0 \tr{} H \tl{} w_1 w_2) = (0 \tr{} R \tl{} v_1 v_2)\poi(v_1 v_2 \tr{} C \tl{} w_1 w_2) =   \allTosem{\rewiring{r}}\poi\allTosem{\hat{a}}.
\end{aligned}
\end{equation}
Because $\allTosem{\cdot}$ is faithful, \eqref{eq:secondproof} yields decompositions $\rewiring{a} = \rewiring{l}\poi\hat{a}$ and $\rewiring{b} = \rewiring{r}\poi\hat{a}$ also on the syntactic side. This allows for a rewriting step $a \Rew{\rrule{l}{r}} b$ as below, where the dashed lines show how the syntactic matching (\emph{cf.} the shape \eqref{eq:rewpropmatch}) is performed.
$$\cgr{diagA.pdf} = \cgr{secondproof1.pdf} = \cgr{secondproof2.pdf} \Rew{\rrule{l}{r}} \cgr{secondproof3.pdf} = \cgr{secondproof4.pdf} = \cgr{diagB.pdf}$$
\end{proof}
\begin{example}\label{ex:DPOrewriting} The syntactic rewriting step \eqref{eq:matchingproblem} takes place in $\syntax{\Sigma,\{c,d\}} + \frob{\{c,d\}}$ where $\Sigma = \{ o_1 \: c \to cd, o_2 \: dd \to c\}$. It maps via Theorem \ref{thm:frobeniusrewriting} to the following DPOI rewriting step in $\FTerm{\Sigma,\{c,d\}}$. We use numbers $1,2,3$ to disambiguate the ``folding'' of the rule interfaces.
\begin{equation*}
\xymatrix@C=15pt@R=10pt{
{\cgr{exGrapRew1.pdf}} \ar[d] & \ar@{}[dl]|(.55){\text{\large $\urcorner$}}
{\cgr{exGrapRew2}} \ar[d]_{} \ar[l] \ar[r]
\ar@{}[dr]|(.55){\text{\large $\ulcorner$}} & {\cgr{exGrapRew3.pdf}} \ar[d] \\
{\cgr{exGrapRew4.pdf}} & \ar[l] {\cgr{exGrapRew5.pdf}} \ar[r] & {\cgr{exGrapRew6.pdf}}
\\
& \ar[ul] \cgr{exGrapRew7.pdf} \ar[u] \ar[ur] &
}
\end{equation*}
\end{example}
\section{Decidability of Confluence}\label{sec:confluence}

This section verifies that the form of rewriting crystallised by Theorem \ref{thm:frobeniusrewriting} enjoys the \emph{Knuth-Bendix property}. Echoing the case of term rewriting, we use this terminology to mean that the confluence problem is reducible to critical pair analysis and both are decidable for terminating rewriting systems.

To this aim, we instantiate to our setting the results of \cite{BGKSZ-esop17}. There the author and collaborators showed that DPOI rewriting enjoys the aforementioned Knuth-Bendix property. Interfaces play a crucial role here, as Plump showed that for DPO rewriting (without interfaces) confluence is undecidable~\cite{Plump1993}.

\begin{definition}[DPOI Critical Pair]\label{def:critical}
Fix an adhesive category and DPOI rules $L_1 \tl{} K_1 \tr{} R_1$ and ${L_2 \tl{} K_2 \tr{} R_2}$. Consider the following two derivations with common source $S$.
\begin{equation}\label{eq:criticalpair}    
\vcenter{
\xymatrix@R=15pt{
    R_1 \ar[d] & \ar[l] K_1 \dlcorner  \ar[d] \ar[r] & L_1 \ar@{}[dr]|(.8){\text{\large $\ulcorner$}\qquad\quad} \ar@/^5pt/[dr]^{f_1} && \ar@/_5pt/[dl]_{f_2} \ar@{}[dl]|(.8){\qquad\quad\text{\large $\urcorner$}} L_2 & \ar[l] K_2 \ar[d] \drcorner\ar[r] & R_2 \ar[d] \\
    H_1  & \ar[l] C_1 \ar[rr] & & S  & & \ar[ll] C_2  \ar[r] & H_2 \\
    &&& \ar@/^/[ull] J \ar@{}[u]|{(\dagger)} \ar@/_/[urr] & & &
    }
}
\end{equation}
We say that $ (H_1\tl{}J) \DPOstepL{} (S \tl{}J ) \DPOstep{} (H_2\tl{}J)$ is a \emph{critical pair} if (i) $[f_1,f_2] \colon L_1 + L_2 \to S$ is epi and (ii) $(\dagger)$ is a pullback. 
It is \emph{joinable} if there exists $(W \tl{} J)$ such that $(H_1 \tl{} J) \DPOstep{}^{\star} (W  \tl{} J)\ {}^{\star}\!\!\!\DPOstepL{} \, (H_2 \tl{} J)$.
\end{definition}
Intuitively, condition (i) ensures that a critical pair $S$ is not bigger than $L_1+L_2$ and condition (ii) says that $J$ is the largest interface that allows both rewriting steps ($J$ is the ``intersection'' of $C_1$ and $C_2$). 

\begin{proposition}[\cite{BGKSZ-esop17}]\label{th:comput}
Suppose that $\catC$ satisfies the following assumptions: (1) it has an epi-mono factorisation system; (2) it has binary coproducts, pushouts and pullbacks; (3) it is adhesive (4) with all the pushouts stable under pullbacks. Then DPOI rewriting in $\catC$ has the Knuth-Bendix property for computable rewriting systems. 
\end{proposition}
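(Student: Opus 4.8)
The plan is to instantiate the classical Knuth--Bendix strategy, lifted to the adhesive/DPOI setting, by factoring the statement into three parts: a reduction of confluence to local confluence, a critical pair lemma characterising local confluence through joinability of critical pairs (Definition \ref{def:critical}), and decidability arguments for the set of critical pairs and for joinability under termination. First I would invoke Newman's lemma: since we restrict attention to terminating systems, global confluence coincides with local confluence, so it suffices to decide the latter. Newman's lemma holds for any terminating abstract rewriting relation, so at this step no categorical hypotheses beyond termination are used; assumptions (1)--(4) enter only in the critical pair analysis.

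The core of the argument is a \emph{critical pair lemma}: the system is locally confluent iff every critical pair is joinable. One direction is immediate, since a critical pair is itself a span $(H_1 \tl{} J) \DPOstepL{} (S \tl{} J) \DPOstep{} (H_2 \tl{} J)$ of diverging steps and must therefore be joinable if the system is locally confluent. For the converse I would start from an arbitrary divergence produced by matches $f_1 \colon L_1 \to S$ and $f_2 \colon L_2 \to S$ and extract a critical pair from it. Using the epi-mono factorisation system (assumption (1)) together with binary coproducts (assumption (2)), I would factor the copairing $[f_1,f_2] \colon L_1 \tns L_2 \to S$ as an epi onto some $S'$ followed by a mono $m \colon S' \to S$. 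Taking the interface $J'$ of the induced overlap on $S'$ to be the pullback $(\dagger)$ of the two pushout complements $C_1', C_2'$ yields, by construction, a critical pair satisfying conditions (i) and (ii), of which the mono $m$ exhibits the original divergence as an instance.

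The hard part will be transporting joinability from the critical pair on $S'$ back to the original divergence on $S$. This is exactly where assumption (4), stability of pushouts under pullback, is essential: it guarantees that the pushout squares defining the DPOI steps out of $S'$ can be recombined along the embedding $m$ to reconstruct valid DPOI steps out of $S$, and, more delicately, that a joining sequence $(H_1' \tl{} J') \DPOstep{}^{\star} (W \tl{} J')\ {}^{\star}\!\!\DPOstepL{}\ (H_2' \tl{} J')$ lifts step-by-step along $m$ to a joining sequence for $H_1$ and $H_2$. Making precise that each individual rewriting step is preserved under this embedding, and that the interface bookkeeping is respected --- i.e.\ that the pullback $(\dagger)$ really is the largest interface admitting both steps, so that nothing is lost when passing to the instance --- is the technically subtle point, and the one requiring the adhesivity (assumption (3)) and pullback-stability hypotheses in full.

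Finally I would establish decidability. Condition (i) forces the overlap $S$ of any critical pair to be an epimorphic image of the finite object $L_1 \tns L_2$; in the intended instance $\catC = \Hyp{\Sigma,\col}$, and more generally for a computable system, there are only finitely many such quotients up to isomorphism and they are effectively enumerable, while the interface $J$ is recovered as the computable pullback $(\dagger)$. Thus the set of critical pairs is finite and computable. For joinability, termination together with finite branching --- a finite hypergraph admits only finitely many matches, hence finitely many one-step reducts --- makes the reduction tree out of each $H_i$ finite and computable, so one can enumerate all reachable normal forms on either side and decide joinability. Combining the critical pair lemma with these two facts shows that confluence reduces to the decidable test of joining finitely many critical pairs, which is precisely the Knuth--Bendix property and completes the proof.
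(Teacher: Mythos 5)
The paper does not actually prove this proposition: it is imported verbatim from \cite{BGKSZ-esop17}, so there is no internal proof to compare against, and your outline should be judged as a reconstruction of that reference. As such it is faithful to the known strategy: Newman's lemma to reduce confluence to local confluence under termination, epi-mono factorisation of $[f_1,f_2]\colon L_1+L_2\to S$ to extract a critical pair from an arbitrary divergence, and a finiteness argument for decidability. Two caveats. First, the step you label ``the hard part'' is essentially the entire content of the result, and your sketch names the hypotheses that make it work without using them: one must actually prove a \emph{restriction} lemma --- that the two DPOI steps out of $S$ restrict along the mono $m\colon S'\to S$ to steps out of $S'$, in particular that the pushout complements $C_1',C_2'$ exist for the factored matches --- and an \emph{extension} lemma --- that a joining derivation from $(H_i'\tl{}J')$ can be replayed after gluing along $m$. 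It is in the extension lemma that the interface is indispensable: for interface-free DPO the analogous transfer fails (this is the source of Plump's undecidability result cited in the paper), and what rescues it here is that joinability of the critical pair is demanded \emph{relative to} the pullback interface $(\dagger)$, so the joining derivation retains enough boundary information to be composed back into the ambient context; adhesivity and stability of pushouts under pullback are exactly what make that recomposition go through. Second, in the decidability part you derive finite branching from ``a finite hypergraph admits only finitely many matches'': that is an argument about the instance $\Hyp{\Sigma,\col}$, not about a general $\catC$ satisfying (1)--(4). In the abstract statement, these finiteness and effectiveness properties are precisely the content of the hypothesis ``computable rewriting system'' (conditions (i)--(iii) of the Remark following the proposition) and should be assumed rather than proved. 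With those two points made explicit, your plan agrees with the proof given in \cite{BGKSZ-esop17}.
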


\begin{remark} In the statement of Proposition \ref{th:comput}, computability refers to the conditions that (i) pullbacks are computable, (ii) for every pair of DPOI rules $L_1 \tl{} K_1 \tr{} R_1$ and ${L_2 \tl{} K_2 \tr{} R_2}$ the set of quotients of $L_1 + L_2$ is finite and computable, and (iii) for all $G \tl{} J$ one can compute every $H \tl{} I$ such that $(G \tl{} I) \DPOstep{} (H \tl{} I)$. 
In particular, (i)-(ii) ensure that the set of critical pairs is finite and computable, and (iii) ensures that any rewriting step is also computable--- see \cite{BGKSZ-esop17} for the full technical details. Caveats on computability are intended to single out those structures where it is reasonable to apply the DPOI mechanism, as opposed to e.g. systems with infinitely many critical pairs.
\end{remark}

\begin{remark} The one of Definition \ref{def:critical} is sometimes called a ``pre-critical'' pair, as no condition ensures that $L_1$ and $L_2$ actually overlap in $S$. This distinction can be formulated abstractly for DPO in adhesive categories when rules are (left- or right-) linear, see \cite{CorradiniStaf} for an overview. However, Proposition \ref{th:comput} works with arbitrary (non-linear) rules, \emph{cf.} \cite{BGKSZ-esop17}, whence the more general definition. Non-overlapping pairs can be singled out in our category of interest, $\FTerm{\Sigma,\col}$, and discarded for confluence testing as they are always joinable--- also, their set is finite, because any DPOI systems is computable in $\FTerm{\Sigma,\col}$. 
\end{remark}
\vspace{.2cm}
\begin{theorem}\label{th:confluenceDPOI} DPOI rewriting in $\FTerm{\Sigma,\col}$ has the Knuth-Bendix property. \end{theorem}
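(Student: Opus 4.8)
The plan is to deduce the result directly from Proposition \ref{th:comput}, once we recognise that DPOI rewriting ``in $\FTerm{\Sigma,\col}$'' is, by the discussion opening Section \ref{sec:rewriting}, nothing but DPOI rewriting in the adhesive category $\Hyp{\Sigma,\col}$: a morphism $0 \tr{} G \tl{} I$ of $\FTerm{\Sigma,\col}$ is identified with the object-with-interface $(G \tr{} I)$, and the rewriting steps are exactly the DPOI steps of Definition \ref{defn:dpo} performed in $\Hyp{\Sigma,\col}$. So it suffices to verify that $\catC = \Hyp{\Sigma,\col}$ meets the four structural assumptions of Proposition \ref{th:comput}, together with the computability conditions recalled in the remark that follows it.

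First I would observe that $\Hyp{\Sigma,\col}$ is itself a presheaf category. Indeed $\Hyp{}$ is a presheaf category (as already used in Proposition \ref{prop:adhesive}), and $\Hyp{\Sigma,\col}$ is by definition the slice $\Hyp{}\setminus(\Sigma,\col)$; a slice of a presheaf category over an object is again a presheaf category, namely presheaves on the category of elements of that object. Consequently $\Hyp{\Sigma,\col}$ is a topos, and this single fact discharges assumptions (1)--(4) at once: it carries an (epi, mono) factorisation system, it has all finite limits and colimits (in particular binary coproducts, pushouts and pullbacks), it is adhesive (Proposition \ref{prop:adhesive}), and all colimits---in particular all pushouts---are stable under pullback, since each pullback functor in a topos is a left adjoint.

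Next I would dispatch the computability side. Because we work throughout with \emph{finite} hypergraphs, pullbacks are computable; for finite $L_1, L_2$ the coproduct $L_1 + L_2$ is a finite object admitting only finitely many, effectively enumerable, quotients; and matching reduces to searching for a homomorphism between finite hypergraphs, so every rewriting step is computable. These are precisely conditions (i)--(iii) recalled in the remark, whence every DPOI system on $\FTerm{\Sigma,\col}$ is computable. Invoking Proposition \ref{th:comput} then yields the Knuth-Bendix property, as claimed.

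The essentially routine parts are the identification of $\FTerm{\Sigma,\col}$-rewriting with $\Hyp{\Sigma,\col}$-rewriting and the finiteness-based computability checks. The one assumption that would be awkward to establish by hand is (4), stability of pushouts under pullback; I expect this to be the main obstacle under a direct approach, but it is obtained for free the moment $\Hyp{\Sigma,\col}$ is recognised as a presheaf topos, so the crux of the argument is really this recognition together with the closure of presheaf categories under slicing.
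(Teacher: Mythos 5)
Your proposal is correct and follows essentially the same route as the paper: reduce to DPOI rewriting in $\Hyp{\Sigma,\col}$, discharge assumptions (1)--(4) of Proposition \ref{th:comput} via the presheaf/slice structure (the paper says the assumptions are closed under slicing, you observe the slice is itself a presheaf topos --- equivalent), and settle computability by finiteness of the hypergraphs involved.
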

\vspace{-.2cm}
\begin{proof} Given that DPOI rewriting in $\FTerm{\Sigma,\col}$ is defined in terms of DPOI rewriting in $\Hyp{\Sigma,\col}$, it suffices to check the statement in $\Hyp{\Sigma,\col}$. We use Proposition \ref{th:comput}: assumptions (1)-(4) hold in any presheaf category and are closed under slice. Therefore, as $\Hyp{\Sigma,\col}$ is defined as the slice of a presheaf category (\emph{cf.} proof of Proposition \ref{prop:adhesive}), it satisfies (1)-(4). It remains to check that any DPOI rewriting system in $\Hyp{\Sigma,\col}$ is computable: the relevant observations are that in $\Hyp{\Sigma,\col}$ (i) pushouts and pushout complements are effectively computable and (ii) for any two rules there are only finitely many hypergraphs that may witness a critical pair as defined in Definition \ref{def:critical}.
\end{proof}



We would now wish to transfer Theorem \ref{th:confluenceDPOI} to syntactic rewriting in $\syntax{\Sigma,\col}+\frob{\col}$. This requires some extra care. In order to determine if a rewriting system $\RS$ on $\syntax{\Sigma,\col}+\frob{\col}$ is confluent, by Theorem~\ref{th:confluenceDPOI} and \ref{thm:frobeniusrewriting} it is enough that all the critical pairs in the DPOI system $\allTosem{\rewiring{\RS}}$ are joinable. However, for full decidability we also need to make sure that the converse holds: if one critical pair in $\allTosem{\rewiring{\RS}}$ is not joinable, then $\RS$ should not be confluent. To ensure this, we need to verify that all the critical pairs of $\allTosem{\mathcal{\rewiring{R}}}$ lay in the image of $\allTosem{\rewiring{\cdot}}$. This amounts to check that they all have discrete interfaces.

\vspace{.2cm}
\begin{lemma}\label{lemma:precritical-discrete} Consider a critical pair in $\Hyp{\Sigma,\col}$ as in~\eqref{eq:criticalpair}. 
If both $K_1$ and $K_2$ are discrete hypergraphs, so is the interface $J$.
\end{lemma}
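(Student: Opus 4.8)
The plan is to reduce the statement to its \emph{hyperedge} component and to exploit the fact, recalled in the proof of Proposition~\ref{prop:adhesive}, that $\Hyp{\Sigma,\col}$ is a slice of a presheaf category. In such a category both pushouts and pullbacks are computed componentwise, so in particular the set of hyperedges $E(-)$ of any pushout or pullback is the corresponding pushout or pullback of hyperedge-sets in $\mathbf{Set}$. Since a hypergraph is discrete precisely when it has no hyperedges, it suffices to show that $E(J) = \emptyset$, where $J$ is the apex of the pullback square $(\dagger)$, i.e.\ $J = C_1 \times_S C_2$ along the cospan $C_1 \to S \leftarrow C_2$.

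First I would analyse the two pushout squares producing $S$. In the left derivation $S$ is the pushout of $L_1 \leftarrow K_1 \to C_1$, so on hyperedges $E(S) = E(L_1) +_{E(K_1)} E(C_1)$. By hypothesis $K_1$ is discrete, hence $E(K_1) = \emptyset$ and this pushout degenerates to the disjoint union $E(S) = E(L_1) \sqcup E(C_1)$: the coprojections $E(f_1) \colon E(L_1) \to E(S)$ and $E(C_1 \to S)$ are then the two injections into the coproduct, so their images are disjoint and jointly cover $E(S)$. The same argument, using that $K_2$ is discrete, gives $E(S) = E(L_2) \sqcup E(C_2)$ with $\mathrm{im}\,E(f_2)$ and $\mathrm{im}\,E(C_2 \to S)$ disjoint and jointly covering. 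In particular $\mathrm{im}\,E(C_1 \to S) = E(S) \setminus \mathrm{im}\,E(f_1)$ and $\mathrm{im}\,E(C_2 \to S) = E(S) \setminus \mathrm{im}\,E(f_2)$.

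Next I would bring in the critical-pair condition (i) of Definition~\ref{def:critical}, namely that $[f_1,f_2]\colon L_1 + L_2 \to S$ is epi. Epimorphisms in a presheaf category are componentwise surjective, so on hyperedges $\mathrm{im}\,E(f_1) \cup \mathrm{im}\,E(f_2) = E(S)$. Combining this with the two complement descriptions above yields
$$\mathrm{im}\,E(C_1 \to S) \cap \mathrm{im}\,E(C_2 \to S) = E(S) \setminus \bigl(\mathrm{im}\,E(f_1) \cup \mathrm{im}\,E(f_2)\bigr) = \emptyset .$$
Finally, since $E(J) = E(C_1) \times_{E(S)} E(C_2)$ consists of those pairs $(e_1,e_2)$ whose images in $E(S)$ coincide, any such common image would lie in the intersection just shown to be empty; hence $E(J) = \emptyset$ and $J$ is discrete.

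I do not expect a genuine obstacle: the argument is essentially bookkeeping on hyperedge-sets, once limits and colimits are taken pointwise. The only point requiring care is the first step — justifying that discreteness of $K_1$ (resp.\ $K_2$) forces the hyperedge coprojections into $S$ to be \emph{injective with disjoint images}. This is precisely what fails for the node component, where the shared nodes of $K_1$ are glued and the node-level pullback $N(J) = N(C_1)\times_{N(S)}N(C_2)$ may well be nonempty; that asymmetry is exactly why the lemma concerns discreteness (absence of hyperedges) rather than emptiness, and everything else follows formally from condition~(i).
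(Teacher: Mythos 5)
Your argument is correct and is essentially the paper's own proof, just spelled out in more detail: discreteness of $K_i$ makes the hyperedge component of each pushout a disjoint union, the epi condition on $[f_1,f_2]$ forces every hyperedge of $S$ into $f_1(L_1)\cup f_2(L_2)$, and the componentwise pullback then has empty hyperedge set. The extra care you take in justifying componentwise computation of (co)limits via the presheaf/slice structure, and your closing remark on why the node component behaves differently, are both accurate.
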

\vspace{-.2cm}
\begin{proof}
For $i=1,2$, since $K_i$ is discrete, the hyperedges of $C_i$ are exactly those of $G_i$ that are not in $f_i(L_i)$.
Since $[f_1,f_2]\colon L_1+L_2 \to S$ is epi, all the hyperedges of $G$ are either in $f_1(L_1)$ or $f_2(L_2)$. Therefore, $J$ cannot contain any hyperedge.
\end{proof}

By definition of $\allTosem{\cdot}$, for every rule $L \tl{} K \tr{} R$ in $\allTosem{\rewiring{\RS}}$, $K$ is discrete. Therefore we can finally transfer the decidability result to the context of $\syntax{\Sigma,\col}+\frob{\col}$.

\vspace{.2cm}
\begin{corollary} Syntactic rewriting in $\syntax{\Sigma,\col}+\frob{\col}$ has the Knuth-Bendix property. \end{corollary}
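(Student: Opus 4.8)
The plan is to deduce the statement by transporting the Knuth--Bendix property of DPOI rewriting (Theorem~\ref{th:confluenceDPOI}) across the correspondence of Theorem~\ref{thm:frobeniusrewriting}, with Lemma~\ref{lemma:precritical-discrete} supplying the one ingredient needed to make the transport work in both directions. First I would fix a terminating syntactic rewriting system $\RS$ on $\syntax{\Sigma,\col}+\frob{\col}$ and pass to the associated DPOI system $\allTosem{\rewiring{\RS}}$ on $\FTerm{\Sigma,\col}$. By Theorem~\ref{thm:frobeniusrewriting} the assignment $\allTosem{\rewiring{\cdot}}$ gives a step-for-step equivalence $a \Rew{\RS} b$ iff $\allTosem{\rewiring{a}} \DPOstep{\allTosem{\rewiring{\RS}}} \allTosem{\rewiring{b}}$. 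Since $\allTosem{\cdot}$ is an isomorphism of props (Theorem~\ref{th:characterisation}) and $\rewiring{\cdot}$ is a bijection built from the compact closed structure, this is a genuine bijection between syntactic configurations and the DPOI configurations with \emph{discrete} interface, i.e.\ those lying in the image of $\allTosem{\rewiring{\cdot}}$. In particular it preserves and reflects termination and the branching structure of reduction sequences, so $\RS$ is confluent precisely when $\allTosem{\rewiring{\RS}}$ is confluent on objects with discrete interface. Moreover the translation of a finite syntactic system is a finite, hence computable, DPOI system, so the computability hypotheses of Theorem~\ref{th:confluenceDPOI} are inherited.

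Next I would invoke Theorem~\ref{th:confluenceDPOI}: DPOI rewriting in $\FTerm{\Sigma,\col}$ has the Knuth--Bendix property, so confluence of $\allTosem{\rewiring{\RS}}$ is equivalent to the joinability of its finitely many, computably enumerable critical pairs. The easy half of the argument is then immediate. If every critical pair of $\allTosem{\rewiring{\RS}}$ is joinable, then $\allTosem{\rewiring{\RS}}$ is confluent, and restricting this confluence to objects with discrete interface yields, through the bijection above, confluence of the terminating system $\RS$.

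The main obstacle is the converse, which is exactly the subtlety flagged before the statement: an unjoinable critical pair of $\allTosem{\rewiring{\RS}}$ certifies non-confluence of $\RS$ only if it arises from a syntactic configuration, that is, only if it sits in the image of $\allTosem{\rewiring{\cdot}}$. This is where Lemma~\ref{lemma:precritical-discrete} is decisive. By construction of $\allTosem{\cdot}$, every rule $L \tl{} K \tr{} R$ of $\allTosem{\rewiring{\RS}}$ has $K$ discrete, so the lemma forces the interface $J$ of any critical pair to be discrete as well; hence every critical pair of $\allTosem{\rewiring{\RS}}$ already lives in the image of $\allTosem{\rewiring{\cdot}}$. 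Consequently a non-joinable critical pair pulls back along the isomorphism $\allTosem{\cdot}$ to a genuine non-joinable syntactic divergence in $\RS$, witnessing failure of confluence. Putting the two directions together, confluence of the terminating system $\RS$ holds if and only if all critical pairs of $\allTosem{\rewiring{\RS}}$ are joinable, and Theorem~\ref{th:confluenceDPOI} guarantees that this condition is decidable. This is precisely the Knuth--Bendix property for syntactic rewriting in $\syntax{\Sigma,\col}+\frob{\col}$.
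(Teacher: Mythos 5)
Your proposal is correct and follows essentially the same route as the paper's own proof: it combines Theorem~\ref{th:confluenceDPOI} with the correspondence of Theorem~\ref{thm:frobeniusrewriting}, and uses Lemma~\ref{lemma:precritical-discrete} together with the fullness of $\allTosem{\cdot}$ to guarantee that every critical pair of $\allTosem{\rewiring{\RS}}$ has a discrete interface and hence comes from a syntactic configuration, which is exactly the step the paper isolates. Your write-up just spells out more explicitly the two directions of the transfer that the paper compresses into the discussion preceding the corollary.
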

\vspace{-.2cm}

\begin{proof} Since DPOI rewriting in $\FTerm{\Sigma,\col}$ has the Knuth-Bendix property (Theorem \ref{th:confluenceDPOI}) and the two forms of rewriting coincide (Theorem \ref{thm:frobeniusrewriting}), as discussed above it suffices to check that for a given critical pair in $\FTerm{\Sigma,\col}$, say witnessed by $(S\tl{}J )$, there exists $a$ in $\syntax{\Sigma}+\frob{\col}$, such that $\allTosem{a} = 0 \tr{} S\tl{}J$. As $\allTosem{\cdot}$ is full on $\FTerm{\Sigma,\col}$, it suffices that $J$ is discrete, which is true by Lemma \ref{lemma:precritical-discrete}.\pagebreak[3]\end{proof}

\section{Conclusions}

We described a sound and complete interpretation of string diagram rewriting in hypergraph categories as double-pushout rewriting of hypergraphs, and showed that it enjoys decidability of confluence for terminating rewriting systems. A chief advantage of this approach is that the challenge posed by performing matching modulo Frobenius equations disappears in the combinatorial model. This becomes important when studying axiomatisations with multiple Frobenius monoids: these can now be all seen as structural equations and baked into the combinatorial model, thus confining questions of confluence and termination to the non-Frobenius axioms. This application of our theory, which we plan to explore in future work, was the main reason to generalise the framework of \cite{BGKSZ-lics16,BGKSZ-esop17}, which is only able to absorb a single Frobenius structure. Another promising direction is the algebraic study of bipartite graphs (Example~\ref{ex:bipartite2}), which may be relevant for analysing diagrammatic languages, like biological metabolic networks \cite{MetabolicNetworks}, based on these structures.
\\[.5em]
\textbf{Acknowledgements} Thanks to Filippo Bonchi, Brendan Fong, Fabio Gadducci, Aleks Kissinger, the GAM participants and referees for useful comments and discussion on the topics of this paper.
\vspace{-.2cm}
{
\bibliographystyle{eptcs}
\bibliography{catBib3-3.bib}
}

\end{document}